\documentclass[12pt, draftclsnofoot, onecolumn]{IEEEtran}

\usepackage[cmex10]{amsmath}
\usepackage{amssymb}
\interdisplaylinepenalty=2500
\usepackage{wasysym}

\DeclareSymbolFont{rsfs}{U}{rsfs}{m}{n}
\DeclareSymbolFontAlphabet{\mathscrsfs}{rsfs}

\usepackage{mathtools}
\usepackage{bm}
\usepackage{subcaption}
\usepackage{amsthm}

\usepackage{tikz}
\usetikzlibrary{arrows,positioning,shapes.geometric,calc}
\usepackage{pgfplots}
\pgfplotsset{compat=1.15}

\usepackage{comment}


\DeclarePairedDelimiter{\ceil}{\lceil}{\rceil}
\DeclarePairedDelimiter{\floor}{\lfloor}{\rfloor}

\definecolor{dgreen}{RGB}{0,100,0}
\newtheoremstyle{custom}
{} 
{} 
{} 
{} 
{\bfseries} 
{:} 
{.25em} 
{} 
\theoremstyle{custom}

\newtheorem{theorem}{Theorem}
\newtheorem{lemma}{Lemma}

\newtheorem{definition}{Definition}

\newtheorem*{theorem*}{Theorem}
\newtheorem*{lemma*}{Lemma}
\newtheorem*{proposition*}{Proposition}
\newtheorem*{definition*}{Definition}
\newtheorem*{example*}{Example}
\newtheorem*{remark*}{Remark}
\newtheorem*{corollary*}{Corollary}

\DeclareMathOperator*{\arctanh}{arctanh}

\makeatletter
\let\l@ENGLISH\l@english
\makeatother

\title{\textbf{Parallelism versus Latency in Simplified\\[-0.72ex]
\hspace*{-2.34ex}%
\mbox{Successive-Cancellation Decoding of Polar Codes}\\[0.36ex]}}

\author{Seyyed~Ali~Hashemi, Marco~Mondelli, Arman~Fazeli, Alexander~Vardy, John~Cioffi, and Andrea~Goldsmith 
\thanks{S.~A.~Hashemi and J.~Cioffi are with the Department of Electrical Engineering, Stanford University, Stanford, CA 94305, USA (email: ahashemi@stanford.edu, cioffi@stanford.edu). M.~Mondelli is with the Institute of Science and Technology (IST) Austria, Klosterneuburg, Austria (email: marco.mondelli@ist.ac.at). A.~Fazeli and A.~Vardy are with the Department of Electrical and Computer Engineering, UC San Diego, La Jolla, CA 92093, USA (email: afazelic@ucsd.edu, avardy@ucsd.edu). A.~Goldsmith is with the Department of Electrical Engineering, Princeton University, Princeton, NJ 08544, USA (email: goldsmith@princeton.edu).}
}

\begin{document}
\bstctlcite{IEEEexample:BSTcontrol}
\maketitle

\vspace{-3.60ex}
\begin{abstract}
\noindent 
\looseness=-1
This paper characterizes the latency of the simplified successive-cancellation (SSC) decoding scheme for polar codes under hardware resource constraints. In particular, when the number of processing~elements~$P$ that can perform SSC decoding operations in parallel is limited, as is the case in practice, the latency of SSC decoding is $O\left(N^{1-1/\mu}+\frac{N}{P}\log_2\log_2\frac{N}{P}\right)$, where $N$ is the block length of the code and $\mu$ is the scaling exponent of the channel. Three direct consequences of this bound are presented. First, in a \mbox{fully-parallel} implementation where $P=\frac{N}{2}$, the latency of SSC decoding is $O\left(N^{1-1/\mu}\right)$, which is sublinear in the block length. This recovers a result from our earlier work. Second, in a \mbox{fully-serial} implementation where $P=1$, the latency of SSC decoding scales as $O\left(N\log_2\log_2 N\right)$. The multiplicative constant is also calculated: we show that the latency of SSC decoding when $P=1$ is given by $\left(2+o(1)\right) N\log_2\log_2 N$. Third, in a \mbox{semi-parallel} implementation, the smallest $P$ that gives the same latency as that of the fully-parallel implementation is $P=N^{1/\mu}$. The tightness of our bound on SSC decoding latency and the applicability of the foregoing results is validated through extensive simulations.
\end{abstract}


\vspace{3.60ex}
\section{Introduction} \label{sec:intro}

\noindent
Polar codes \cite{Ari09} have been adopted as the coding scheme for control and physical broadcast channels of the enhanced mobile broadband (eMBB) mode and the ultra-reliable low latency communications (URLLC) mode in the fifth generation (5G) wireless communications standard \cite{3gpp_polar, WON2020}. For a polar code of block length $N$, the encoding and successive-cancellation (SC) decoding complexity for any binary memoryless symmetric (BMS) channel is $O\left(N\log_2 N\right)$. Polar codes can be constructed with complexity that is sublinear in $N$ \cite{mondelli2018construction}, and the error probability under SC decoding scales with the block length roughly as $2^{-\sqrt{N}}$ \cite{ArT09}. The gap to capacity scales with the block length roughly as 
\begin{equation}
I(W)-R ~\sim~ N^{-\mu} \ \text{,}
\end{equation}
where $W$ is the BMS transmission channel, $I(W)$ is its capacity, $R$ is the rate of the code, and $\mu$ is called the \emph{scaling exponent} (see \cite{HAU14, MHU15unif-ieeeit, XG13, GB14, MHU14list-ieeeit, fazeli2018binary, guruswami2019ar}). In general, the scaling exponent $\mu$ depends on the transmission channel $W$. It is known \cite{HAU14, MHU15unif-ieeeit} that $3.579 \le \mu \le 4.714$ for any BMS channel $W$. Furthermore, $\mu \approx 3.63$ when $W$ is a binary erasure channel (BEC), as shown in \cite{MHU15unif-ieeeit}, $\mu \approx 4$  when $W$ is a binary additive white Gaussian noise channel (BAWGNC), as shown in \cite{KMTU10}, and it is conjectured that $\mu \approx 4.2$ when $W$ is a binary symmetric channel (BSC). It is possible to approach the optimal scaling exponent $\mu=2$ by using large polarization kernels \cite{fazeli2018binary, guruswami2019ar, wang2019polar}. The moderate deviations regime, in which both the error probability and the gap to capacity jointly vanish as the block length grows, has also been a subject of recent investigation \cite{MHU15unif-ieeeit, fong2017scaling, wang2018polar, blasiok2018polar}.

For practical block lengths, polar codes' error-correction performance under SC decoding is not satisfactory. Therefore, an SC list (SCL) decoder with time complexity $O\left(L N\log_2 N\right)$ and space complexity $O\left(L N\right)$ is used \cite{TVa15}, where $L$ is the size of the list. SCL decoding runs $L$ coupled SC decoders in parallel and maintains a list of the most likely codewords. The SCL decoder's empirical performance is close to that of the optimal MAP decoder with practical list-size $L$. Furthermore, by adding some extra bits of cyclic redundancy check (CRC) precoding, the performance is comparable to state-of-the-art low-density parity-check (LDPC) codes. 

SC-based decoding algorithms, such as SC and SCL decoding, suffer from high latency. This is due to the fact that SC decoding is inherently a serial algorithm: it proceeds sequentially bit by bit. In order to mitigate this issue, a \emph{simplified SC} (SSC) decoder was proposed in \cite{alamdar}. The SSC decoder identifies two specific constituent codes in the polar code, namely, constituent codes of rate $0$ (Rate-0) and rate $1$ (Rate-1). The bits within each constituent code can be decoded in parallel; thus, these constituent codes are decoded in one shot. Consequently, the latency is reduced without increasing the error probability. In \cite{sarkis, hanif, condo2018fast}, more constituent codes were identified and low-complexity parallel decoders were designed, increasing the throughput and reducing the latency even further. These results were extended to SCL decoders in \cite{hashemi_SSCL_TCASI, hashemi_FSSCL_TSP}. Recently, it was shown in \cite{mondelli2020sublinear} that the latency of the SSC decoder proposed in \cite{alamdar} is $O\left(N^{1-1/\mu}\right)$. Thus the latency of SSC decoding is \emph{sublinear in $N$}, in contrast to the $O\left(N\right)$ latency of standard SC decoding \cite{Ari09}. However, these results are based on the assumption that the hardware resources are unlimited, and thus a fully-parallel architecture can be implemented. In a practical application, this assumption is no longer valid and a specific number of processing elements (PEs) $P$ are allocated to perform the operations in SC-based decoding algorithms \cite{Leroux2012}. In the extreme case where $P=1$ (a fully-serial architecture), the latency of SC decoding grows from $O\left(N\right)$ to $O\left(N \log_2 N\right)$.

This paper quantifies the latency of the SSC decoder proposed in \cite{alamdar} as a function of hardware resource constraints. Our main result is that the latency of SSC decoding scales as
\begin{equation}
\label{main-bound}
O\left(N^{1-1/\mu}+\frac{N}{P}\log_2\log_2\frac{N}{P}\right)
\end{equation}
with the block length $N$. Several consequences of the bound in (\ref{main-bound}) are as follows. In a fully-parallel implementation, where $P=\frac{N}{2}$, this bound reduces to $O\left(N^{1-1/\mu}\right)$, thereby recovering the main result of \cite{mondelli2020sublinear}. In a fully-serial implementation, where $P=1$, the bound in (\ref{main-bound}) reduces to $O\left(N\log_2\log_2 N\right)$. This aligns with the results of \cite{wang2019log}, wherein a variant of polar codes with log-logarithmic complexity per information bit has been introduced. However, this paper's analysis is for \emph{conventional} polar codes rather than a variant thereof. Moreover, for the case where $P=1$, we determine the multiplicative constant in our bound and further refine it to $\left(2+o(1)\right)N\log_2\log_2 N$. Finally, it is shown that $P=N^{\frac{1}{\mu}}$ is the smallest number of processing elements that, asymptotically, provides the same latency as that of the fully-parallel decoder. The applicability of the foregoing results is validated through extensive simulations. Our numerical results confirm the presented bounds' tightness.

The rest of this paper is organized as follows: Section\,\ref{sec:prel} explains polar codes and discusses SC and SSC decoding algorithms with limited number of PEs; Section\,\ref{sec:main} states and proves that in an implementation of the SSC decoder with $P$ processing elements, the latency is upper bounded by $O\left(N^{1-1/\mu}+\frac{N}{P}\log_2\log_2\frac{N}{P}\right)$; numerical results are presented in Section\,\ref{sec:numerical} to verify the proposed bounds; and conclusions are drawn in Section\,\ref{sec:concl}.

\section{Polar Coding Preliminaries} \label{sec:prel}

\subsection{Polar Codes}

Consider a BMS channel $W: \mathcal{X}\to \mathcal{Y}$ defined by transition probabilities $\{W(y \mid x) : x\in \mathcal{X}, y\in \mathcal{Y}\}$, where $\mathcal{X}=\{0,1\}$ is the input alphabet and $\mathcal{Y}$ is an arbitrary output alphabet. The reliability of the channel $W$ can be measured by its Bhattacharyya parameter $Z(W)= \sum_{y \in \mathcal{Y}} \sqrt{W(y\mid 0)W(y \mid 1)}$. Channel polarization \cite{Ari09} is the process of mapping two copies of the channel $W$ into two synthetic channels $W^0: \mathcal{X}\to \mathcal{Y}^2$ and $W^1:\mathcal{X}\to \mathcal{X}\times\mathcal{Y}^2$ as
\begin{equation}\label{eq:mapch}
\begin{split}
W^0(y_1, y_2\mid x_1) & = \sum_{x_2\in \mathcal X} \frac{1}{2}W(y_1\mid x_1 \oplus x_2) W(y_2\mid x_2),\\
W^1(y_1, y_2, x_1\mid x_2) & = \frac{1}{2}W(y_1\mid x_1 \oplus x_2) W(y_2\mid x_2),\\
\end{split}
\end{equation}
where $W^0$ is a \emph{worse} channel and $W^1$ is a \emph{better} channel than $W$ because \cite{Ari09,RiU08}
\begin{align}
Z(W)\sqrt{2-Z(W)^2}&\le Z(W^0)\le 2Z(W)-Z(W)^2,\label{eq:minusB}\\
&Z(W^1)=Z(W)^2.\label{eq:plusB}
\end{align}
By recursively performing the operation in \eqref{eq:mapch} $n$ times, $2^n$ copies of $W$ are transformed into $2^n$ synthetic channels $W_n^{(i)} = (((W^{b_1^{(i)}})^{b_2^{(i)}})^{\cdots})^{b_n^{(i)}}$, where $1\leq i\leq 2^n$ and $(b_1^{(i)}, \ldots, b_n^{(i)})$ is the binary representation of the integer $i-1$ over $n$ bits. Consider a random sequence of channels, defined recursively as
\begin{equation}
W_{n} = \left\{ \begin{array}{ll}W_{n-1}^0, & \mbox{ w.p. } 1/2,\\ W_{n-1}^1,& \mbox{ w.p. } 1/2,\\ \end{array}\right.
\end{equation}
where $W_0=W$. Using \eqref{eq:minusB} and \eqref{eq:plusB}, the random process that tracks the Bhattacharyya parameter of $W_n$ can be represented as
\begin{equation}\label{eq:eqBMSC}
Z_{n} \left\{ \begin{array}{ll}\in \left[Z_{n-1}\sqrt{2-Z^2_{n-1}},\, 2 Z_{n-1}-Z^2_{n-1}\right], & \mbox{ w.p. } 1/2,\\ =Z^2_{n-1},& \mbox{ w.p. } 1/2,\\ \end{array}\right.
\end{equation}
where $Z_n=Z(W_n)$ and $n\ge 1$.

The construction of polar codes comprises the assigning of information bits to the set of positions with the best Bhattacharyya parameters, as stated in the following definition.

\begin{definition}[Polar code construction]\label{def:construction}
For a given block length $N=2^n$, BMS channel $W$, and probability of error $p_e\in (0, 1)$, the polar code $\mathcal C_{\rm polar}(p_e, W, N)$ is constructed by assigning the information bits to the positions corresponding to all the synthetic channels whose Bhattacharyya parameter is less than $p_e/N$ and by assigning a predefined (frozen) value to the remaining positions.
\end{definition}

With the construction rule of Definition~\ref{def:construction}, the error probability under SC decoding is guaranteed to be \emph{at most} $p_e$. Moreover, this construction rule ensures that the rate $R$ of the code tends to capacity at a speed that is captured by the \emph{scaling exponent} of the channel.

\begin{definition}[Upper bound on scaling exponent]\label{def:upscal}
We say that $\mu$ is an upper bound on the scaling exponent if there exists a function $h(x): [0, 1] \to [0, 1]$ such that $h(0)=h(1)=0$, $h(x)>0$ for any $x\in (0, 1)$, and
\begin{equation}\label{eq:suph}
\displaystyle\sup_{\substack{x\in (0, 1)\\y \in [x\sqrt{2-x^2}, 2x-x^2]}}\displaystyle\frac{h(x^2)+h(y)}{2h(x)} < 2^{-1/\mu}.
\end{equation}
\end{definition}

By defining the scaling exponent as in Definition~\ref{def:upscal}, the gap to capacity $I(W)-R$ scales as $O(N^{-1/\mu})$, see Theorem~1 of \cite{MHU15unif-ieeeit}. Note that $\mu \approx 4$ for BAWGNC as shown in \cite{KMTU10}, and it is conjectured that $\mu \approx 4.2$ for BSC. For the BEC, the condition \eqref{eq:suph} can be relaxed to
\begin{equation}
\displaystyle\sup_{x\in (0, 1)}\displaystyle\frac{h(x^2)+h(2x-x^2)}{2h(x)} < 2^{-1/\mu},
\end{equation}
which gives a numerical value $\mu\approx 3.63$.

\subsection{Successive-Cancellation Decoding}

SC decoding is a message passing algorithm on the factor graph of polar codes, as shown in Fig.~\ref{fig:pcEncDec} for a polar code of length $N=8$. At stage $n$ of the factor graph, the LLR values $\bm{\alpha}_n^{0:N-1} = \{\alpha_n^0,\alpha_n^1,\ldots,\alpha_n^{N-1}\}$, that are calculated from the received channel-output vector, are fed to the decoder. Fig.~\ref{fig:pcDec8} shows how the vector of internal LLR values, $\bm{\alpha}_s^{0:N-1} = \{\alpha_s^0,\alpha_s^1,\ldots,\alpha_s^{N-1}\}$, which is composed of $\frac{N}{2^s}$ vectors of $2^s$ LLR values $\bm{\alpha}_s^{i2^s:(i+1)2^s-1} = \{\alpha_s^{i2^s},\alpha_s^{i2^s+1},\ldots,\alpha_s^{(i+1)2^s-1}\}$, is generated. Specifically, at each level $s$, we have:
\begin{equation}
\alpha_s^i =
    \begin{cases}
    f(\alpha_{s+1}^i,\alpha_{s+1}^{i+2^s}) & \text{if $\floor{\frac{i}{2^s}}\, \mathrm{mod}\,2 = 0$,} \\
    g(\alpha_{s+1}^{i},\alpha_{s+1}^{i-2^s},\beta_{s}^{i-2^s}) & \text{if $\floor{\frac{i}{2^s}}\, \mathrm{mod}\,2 = 1$,}
    \end{cases} \label{eq:alpha}
\end{equation}
where $f(a,b) = 2\arctanh \left(\tanh\left(\frac{a}{2}\right)\tanh\left(\frac{b}{2}\right)\right)$, $g(a,b,c) = a + (1-2c)b$, and $\beta_s^i$ is the $i$-th bit estimate at level $s$ of the factor graph. As shown in Fig.~\ref{fig:pcEnc8}, the bit estimates $\bm{\beta}_s = \{\beta_s^0,\beta_s^1,\ldots,\beta_s^{N-1}\}$ are calculated as
\begin{equation}
\beta_s^i =
    \begin{cases}
    \beta_{s-1}^{i} \oplus \beta_{s-1}^{i+2^s} & \text{if $\floor{\frac{i}{2^s}}\, \mathrm{mod}\,2 = 0$,} \\
    \beta_{s-1}^{i} & \text{if $\floor{\frac{i}{2^s}}\, \mathrm{mod}\,2 = 1$,}
    \end{cases} \label{eq:beta}
\end{equation}
where $\oplus$ is the bit-wise XOR operation. All frozen bits are assumed to be zero. Hence at level $s=0$, the $i$-th bit $\hat{u}_i$ is estimated as
\begin{equation}
    \hat{u}_i = \beta_{0}^{i} =
    \begin{cases}
    0 & \text{if $u_i$ is a frozen bit or $\alpha_0^i>0$,} \\
    1 & \text{otherwise.}
    \end{cases}
\end{equation}

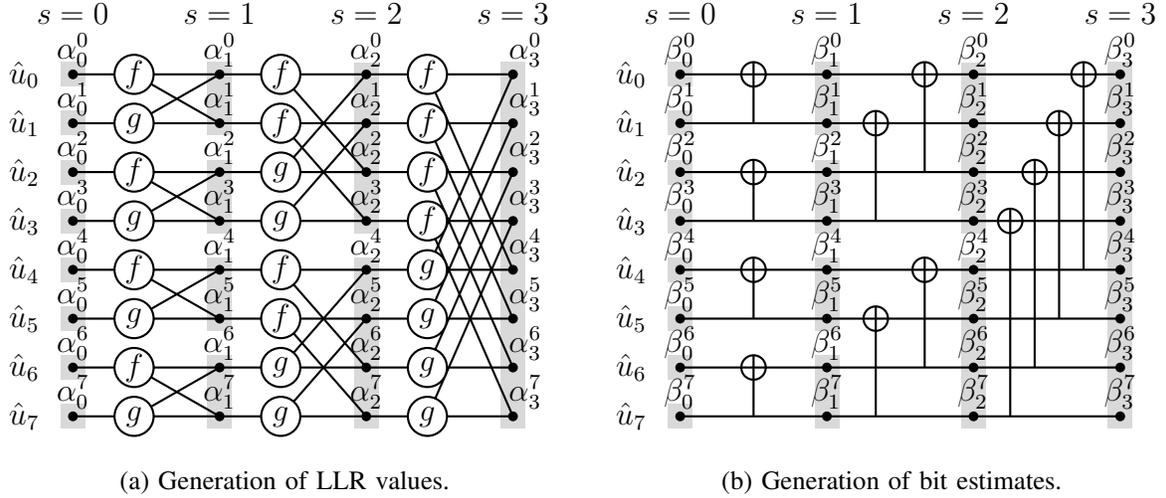
\begin{figure*}[t]
\centering
\begin{subfigure}{.48\textwidth}
\centering
\begin{tikzpicture}[scale=.65, thick]

  \node at (0,0) {$\hat{u}_0$};
  \node at (0,-1) {$\hat{u}_1$};
  \node at (0,-2) {$\hat{u}_2$};
  \node at (0,-3) {$\hat{u}_3$};
  \node at (0,-4) {$\hat{u}_4$};
  \node at (0,-5) {$\hat{u}_5$};
  \node at (0,-6) {$\hat{u}_6$};
  \node at (0,-7) {$\hat{u}_7$};

  \foreach \x in {-7,...,0}
  {
    \fill[color=black!15] (.75,\x+.25) rectangle ++(.5,-.5);
  }
  
  \foreach \x in {-6,-4,-2,0}
  {
    \fill[color=black!15] (3.75,\x+.25) rectangle ++(.5,-1.5);
  }

  \foreach \x in {-4,0}
  {
    \fill[color=black!15] (6.75,\x+.25) rectangle ++(.5,-3.5);
  }
 
  \fill[color=black!15] (9.75,.25) rectangle ++(.5,-7.5);

  \foreach \x in {-7,...,0}
  {
    \draw (1,\x) -- (10,\x);
    \fill (1,\x) circle [radius=.1];
    \fill (4,\x) circle [radius=.1];
    \fill (7,\x) circle [radius=.1];
    \fill (10,\x) circle [radius=.1];
  }
  
  \node at (1,1.25) {$s=0$};
  \node at (4,1.25) {$s=1$};
  \node at (7,1.25) {$s=2$};
  \node at (10,1.25) {$s=3$};

  \node at (1,.5) {$\alpha_{0}^{0}$};
  \node at (1,-.5) {$\alpha_{0}^{1}$};
  \node at (1,-1.5) {$\alpha_{0}^{2}$};
  \node at (1,-2.5) {$\alpha_{0}^{3}$};
  \node at (1,-3.5) {$\alpha_{0}^{4}$};
  \node at (1,-4.5) {$\alpha_{0}^{5}$};
  \node at (1,-5.5) {$\alpha_{0}^{6}$};
  \node at (1,-6.5) {$\alpha_{0}^{7}$};

  \node at (4,.5) {$\alpha_{1}^{0}$};
  \node at (4,-.5) {$\alpha_{1}^{1}$};
  \node at (4,-1.5) {$\alpha_{1}^{2}$};
  \node at (4,-2.5) {$\alpha_{1}^{3}$};
  \node at (4,-3.5) {$\alpha_{1}^{4}$};
  \node at (4,-4.5) {$\alpha_{1}^{5}$};
  \node at (4,-5.5) {$\alpha_{1}^{6}$};
  \node at (4,-6.5) {$\alpha_{1}^{7}$};

  \node at (7,.5) {$\alpha_{2}^{0}$};
  \node at (7,-.5) {$\alpha_{2}^{1}$};
  \node at (7,-1.5) {$\alpha_{2}^{2}$};
  \node at (7,-2.5) {$\alpha_{2}^{3}$};
  \node at (7,-3.5) {$\alpha_{2}^{4}$};
  \node at (7,-4.5) {$\alpha_{2}^{5}$};
  \node at (7,-5.5) {$\alpha_{2}^{6}$};
  \node at (7,-6.5) {$\alpha_{2}^{7}$};

  \node at (10.25,.5) {$\alpha_{3}^{0}$};
  \node at (10.25,-.5) {$\alpha_{3}^{1}$};
  \node at (10.25,-1.5) {$\alpha_{3}^{2}$};
  \node at (10.25,-2.5) {$\alpha_{3}^{3}$};
  \node at (10.25,-3.5) {$\alpha_{3}^{4}$};
  \node at (10.25,-4.5) {$\alpha_{3}^{5}$};
  \node at (10.25,-5.5) {$\alpha_{3}^{6}$};
  \node at (10.25,-6.5) {$\alpha_{3}^{7}$};
  
  \draw (2.25,0) -- (4,-1);
  \draw (4,0) -- (2.25,-1);
  \draw (2.25,-2) -- (4,-3);
  \draw (4,-2) -- (2.25,-3);
  \draw (2.25,-4) -- (4,-5);
  \draw (4,-4) -- (2.25,-5);
  \draw (2.25,-6) -- (4,-7);
  \draw (4,-6) -- (2.25,-7);

  \draw (5.25,0) -- (7,-2);
  \draw (7,0) -- (5.25,-2);
  \draw (5.25,-1) -- (7,-3);
  \draw (7,-1) -- (5.25,-3);
  \draw (5.25,-4) -- (7,-6);
  \draw (7,-4) -- (5.25,-6);
  \draw (5.25,-5) -- (7,-7);
  \draw (7,-5) -- (5.25,-7);

  \draw (8.25,0) -- (10,-4);
  \draw (10,0) -- (8.25,-4);
  \draw (8.25,-1) -- (10,-5);
  \draw (10,-1) -- (8.25,-5);
  \draw (8.25,-2) -- (10,-6);
  \draw (10,-2) -- (8.25,-6);
  \draw (8.25,-3) -- (10,-7);
  \draw (10,-3) -- (8.25,-7);

  \foreach \x in {-7,...,0}
  {
    \foreach \y in {2.25,5.25,8.25}
    {
      \draw[fill=white] (\y,\x) circle [radius=.4];
    }
  }

  \node at (2.25,0) {$f$};
  \node at (2.25,-1) {$g$};
  \node at (2.25,-2) {$f$};
  \node at (2.25,-3) {$g$};
  \node at (2.25,-4) {$f$};
  \node at (2.25,-5) {$g$};
  \node at (2.25,-6) {$f$};
  \node at (2.25,-7) {$g$};

  \node at (5.25,0) {$f$};
  \node at (5.25,-1) {$f$};
  \node at (5.25,-2) {$g$};
  \node at (5.25,-3) {$g$};
  \node at (5.25,-4) {$f$};
  \node at (5.25,-5) {$f$};
  \node at (5.25,-6) {$g$};
  \node at (5.25,-7) {$g$};

  \node at (8.25,0) {$f$};
  \node at (8.25,-1) {$f$};
  \node at (8.25,-2) {$f$};
  \node at (8.25,-3) {$f$};
  \node at (8.25,-4) {$g$};
  \node at (8.25,-5) {$g$};
  \node at (8.25,-6) {$g$};
  \node at (8.25,-7) {$g$};
  
\end{tikzpicture}
\caption{Generation of LLR values.}
\label{fig:pcDec8}
\end{subfigure}
\begin{subfigure}{.48\textwidth}
\centering
\begin{tikzpicture}[scale=.65, thick]

  \node at (0,0) {$\hat{u}_0$};
  \node at (0,-1) {$\hat{u}_1$};
  \node at (0,-2) {$\hat{u}_2$};
  \node at (0,-3) {$\hat{u}_3$};
  \node at (0,-4) {$\hat{u}_4$};
  \node at (0,-5) {$\hat{u}_5$};
  \node at (0,-6) {$\hat{u}_6$};
  \node at (0,-7) {$\hat{u}_7$};

  \foreach \x in {-7,...,0}
  {
    \fill[color=black!15] (.75,\x+.25) rectangle ++(.5,-.5);
  }
  
  \foreach \x in {-6,-4,-2,0}
  {
    \fill[color=black!15] (3.75,\x+.25) rectangle ++(.5,-1.5);
  }

  \foreach \x in {-4,0}
  {
    \fill[color=black!15] (6.75,\x+.25) rectangle ++(.5,-3.5);
  }
 
  \fill[color=black!15] (9.75,.25) rectangle ++(.5,-7.5);

  \foreach \x in {-7,...,0}
  {
    \draw (1,\x) -- (10,\x);
    \fill (1,\x) circle [radius=.1];
    \fill (4,\x) circle [radius=.1];
    \fill (7,\x) circle [radius=.1];
    \fill (10,\x) circle [radius=.1];
  }
  
  \node at (1,1.25) {$s=0$};
  \node at (4,1.25) {$s=1$};
  \node at (7,1.25) {$s=2$};
  \node at (10,1.25) {$s=3$};

  \node at (1,.5) {$\beta_{0}^{0}$};
  \node at (1,-.5) {$\beta_{0}^{1}$};
  \node at (1,-1.5) {$\beta_{0}^{2}$};
  \node at (1,-2.5) {$\beta_{0}^{3}$};
  \node at (1,-3.5) {$\beta_{0}^{4}$};
  \node at (1,-4.5) {$\beta_{0}^{5}$};
  \node at (1,-5.5) {$\beta_{0}^{6}$};
  \node at (1,-6.5) {$\beta_{0}^{7}$};

  \node at (4,.5) {$\beta_{1}^{0}$};
  \node at (4,-.5) {$\beta_{1}^{1}$};
  \node at (4,-1.5) {$\beta_{1}^{2}$};
  \node at (4,-2.5) {$\beta_{1}^{3}$};
  \node at (4,-3.5) {$\beta_{1}^{4}$};
  \node at (4,-4.5) {$\beta_{1}^{5}$};
  \node at (4,-5.5) {$\beta_{1}^{6}$};
  \node at (4,-6.5) {$\beta_{1}^{7}$};

  \node at (7,.5) {$\beta_{2}^{0}$};
  \node at (7,-.5) {$\beta_{2}^{1}$};
  \node at (7,-1.5) {$\beta_{2}^{2}$};
  \node at (7,-2.5) {$\beta_{2}^{3}$};
  \node at (7,-3.5) {$\beta_{2}^{4}$};
  \node at (7,-4.5) {$\beta_{2}^{5}$};
  \node at (7,-5.5) {$\beta_{2}^{6}$};
  \node at (7,-6.5) {$\beta_{2}^{7}$};

  \node at (10,.5) {$\beta_{3}^{0}$};
  \node at (10,-.5) {$\beta_{3}^{1}$};
  \node at (10,-1.5) {$\beta_{3}^{2}$};
  \node at (10,-2.5) {$\beta_{3}^{3}$};
  \node at (10,-3.5) {$\beta_{3}^{4}$};
  \node at (10,-4.5) {$\beta_{3}^{5}$};
  \node at (10,-5.5) {$\beta_{3}^{6}$};
  \node at (10,-6.5) {$\beta_{3}^{7}$};
  
  \foreach \x in {-7,-5,-3,-1}
  {
    \draw (2.5,\x) -- (2.5,\x+1.25);
    \draw (2.5,\x+1) circle [radius=.25];
  }
  
  \draw (6,-2) -- (6,.25);
  \draw (5,-3) -- (5,-.75);
  \draw (6,0) circle [radius=.25];
  \draw (5,-1) circle [radius=.25];
  \draw (6,-6) -- (6,-3.75);
  \draw (5,-7) -- (5,-4.75);
  \draw (6,-4) circle [radius=.25];
  \draw (5,-5) circle [radius=.25];

  \draw (9.25,-4) -- (9.25,.25);
  \draw (8.75,-5) -- (8.75,-.75);
  \draw (8.25,-6) -- (8.25,-1.75);
  \draw (7.75,-7) -- (7.75,-2.75);
  \draw (9.25,0) circle [radius=.25];
  \draw (8.75,-1) circle [radius=.25];
  \draw (8.25,-2) circle [radius=.25];
  \draw (7.75,-3) circle [radius=.25];

\end{tikzpicture}
\caption{Generation of bit estimates.}
\label{fig:pcEnc8}
\end{subfigure}
\caption{SC decoding on the factor graph representation of polar codes with $N=8$. Each gray area represents one node in the binary tree representation of SC decoding.}
\label{fig:pcEncDec}
\end{figure*}

By combining all the operations in (\ref{eq:alpha}) that can be performed in parallel, SC decoding can be represented as on Fig.~\ref{fig:scDec}'s binary tree. Fig.~\ref{fig:scDec}'s root node at decoding stage $n$ is fed with the LLR values, and the results of operations in (\ref{eq:alpha}) and (\ref{eq:beta}) are passed on the branches of the decoding tree. SC decoding has a sequential structure in the sense that the decoding of each bit depends on the decoding of its previous bits. More formally, on the one hand, when $\mod(\frac{i}{2^s},2) = 0$, the calculation of $\alpha_s^i$ at level $s$ is only dependent on the LLR values that are received from a node at level $s-1$. On the other hand, when $\mod(\frac{i}{2^s},2) = 1$, the calculation of $\alpha_s^i$ also depends on a hard bit estimation $\beta^{i-2^s}_s$ that is a result of estimating the previous bits (see (\ref{eq:alpha})). Consequently, SC decoding proceeds by traversing the binary tree such that the nodes at level $s=0$ are visited from left to right.

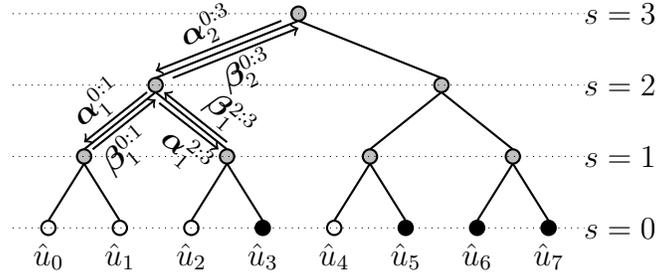
\begin{figure}[t]
\centering
\begin{tikzpicture}[scale=1.9, thick]
  \draw [fill=lightgray] (0,0) circle [radius=.05];

  \draw [fill=lightgray] (-1,-.5) circle [radius=.05];
  \draw [fill=lightgray] (1,-.5) circle [radius=.05];

  \draw [fill=lightgray] (-1.5,-1) circle [radius=.05];
  \draw [fill=lightgray] (-.5,-1) circle [radius=.05];
  \draw [fill=lightgray] (.5,-1) circle [radius=.05];
  \draw [fill=lightgray] (1.5,-1) circle [radius=.05];

  \draw (-1.75,-1.5) circle [radius=.05];
  \draw (-1.25,-1.5) circle [radius=.05];
  \draw (-.75,-1.5) circle [radius=.05];
  \draw [fill=black] (-.25,-1.5) circle [radius=.05];
  \draw (.25,-1.5) circle [radius=.05];
  \draw [fill=black] (.75,-1.5) circle [radius=.05];
  \draw [fill=black] (1.25,-1.5) circle [radius=.05];
  \draw [fill=black] (1.75,-1.5) circle [radius=.05];

  \node at (-1.75,-1.7) {$\hat{u}_0$};
  \node at (-1.25,-1.7) {$\hat{u}_1$};
  \node at (-.75,-1.7) {$\hat{u}_2$};
  \node at (-.25,-1.7) {$\hat{u}_3$};
  \node at (.25,-1.7) {$\hat{u}_4$};
  \node at (.75,-1.7) {$\hat{u}_5$};
  \node at (1.25,-1.7) {$\hat{u}_6$};
  \node at (1.75,-1.7) {$\hat{u}_7$};

  \draw (0,-.05) -- (-1,-.45);
  \draw (0,-.05) -- (1,-.45);

  \draw (-1,-.55) -- (-1.5,-.95);
  \draw (-1,-.55) -- (-.5,-.95);
  \draw (1,-.55) -- (.5,-.95);
  \draw (1,-.55) -- (1.5,-.95);

  \draw (-1.5,-1.05) -- (-1.75,-1.45);
  \draw (-1.5,-1.05) -- (-1.25,-1.45);
  \draw (-.5,-1.05) -- (-.75,-1.45);
  \draw (-.5,-1.05) -- (-.25,-1.45);
  \draw (.5,-1.05) -- (.25,-1.45);
  \draw (.5,-1.05) -- (.75,-1.45);
  \draw (1.5,-1.05) -- (1.25,-1.45);
  \draw (1.5,-1.05) -- (1.75,-1.45);

  \draw [thin,dotted] (-2,0) -- (2,0);
  \draw [thin,dotted] (-2,-.5) -- (2,-.5);
  \draw [thin,dotted] (-2,-1) -- (2,-1);
  \draw [thin,dotted] (-2,-1.5) -- (2,-1.5);

  \node at (2.25,0) {$s=3$};
  \node at (2.25,-.5) {$s=2$};
  \node at (2.25,-1) {$s=1$};
  \node at (2.25,-1.5) {$s=0$};

  \draw [->] (-.12,-.05) -- (-1,-.4) node [above=-.1cm,midway,rotate=25] {$\bm{\alpha}_2^{0:3}$};
  \draw [->] (-.88,-.45) -- (0,-.1) node [below=-.1cm,midway,rotate=25] {$\bm{\beta}_2^{0:3}$};

  \draw [->] (-1.06,-.55) -- (-1.5,-.9) node [above=-.1cm,midway,rotate=40] {$\bm{\alpha}_1^{0:1}$};
  \draw [->] (-1.44,-.95) -- (-1.0,-0.6) node [below=-.1cm,near start,rotate=40] {$\bm{\beta}_1^{0:1}$};

  \draw [<-] (-.94,-.55) -- (-.5,-.9) node [above=-.1cm,near end,rotate=-40] {$\bm{\beta}_1^{2:3}$};
  \draw [<-] (-.56,-.95) -- (-0.975,-.625) node [below=-.1cm,near start,rotate=-40] {$\bm{\alpha}_1^{2:3}$};

\end{tikzpicture}
\caption{Binary tree representation of SC decoding for a polar code with $N=8$ and $R=1/2$. The white nodes represent frozen bits and the black nodes represent information bits.}
\label{fig:scDec}
\end{figure}

All operations at a specific SC-decoding-tree node can be in principle performed in parallel. However, when the SC-decoder hardware implementation is considered, the number of PEs that perform the calculations in (\ref{eq:alpha}) is constrained to a specific value $P$, which can improve the trade-off between chip area and latency \cite{Leroux2012}. As shown in \cite{Leroux2012}, if the channel LLR values are readily available, then the latency of SC decoding is
\begin{equation}
    \mathcal{L} = 2N + \frac{N}{P}\log_2\left(\frac{N}{4P}\right) \text{.}
\end{equation}

For different values of $P$, Fig.~\ref{fig:pcSch} shows the resulting LLR values at each time step in a length $N=8$ polar code. When $P=\frac{N}{2}$, the decoder can perform all the parallelizable operations in one time step, thus the implementation is \emph{fully-parallel} (see Fig.~\ref{fig:pcSchFP}). When $P=1$, only one operation can be performed at each time step, thus the implementation is \emph{fully-serial} (see Fig.~\ref{fig:pcSchFS}). Any $P$ in the interval $(1,\frac{N}{2})$ results in a \emph{semi-parallel} implementation (see Fig.~\ref{fig:pcSchSP}).

\begin{figure}[t]
\centering
\begin{subfigure}{\textwidth}
\centering
\begin{tikzpicture}[scale=.6]

\node at (-1.5,1) {time};
\node at (-1.5,0) {$\text{PE}_1$};
\node at (-1.5,-1) {$\text{PE}_2$};
\node at (-1.5,-2) {$\text{PE}_3$};
\node at (-1.5,-3) {$\text{PE}_4$};
\node at (-1.5,-4) {output};

\node at (0,0) {$\alpha_2^0$};
\node at (0,-1) {$\alpha_2^1$};
\node at (0,-2) {$\alpha_2^2$};
\node at (0,-3) {$\alpha_2^3$};
\node at (7,0) {$\alpha_2^4$};
\node at (7,-1) {$\alpha_2^5$};
\node at (7,-2) {$\alpha_2^6$};
\node at (7,-3) {$\alpha_2^7$};

\node at (1,0) {$\alpha_1^0$};
\node at (1,-1) {$\alpha_1^1$};
\node at (7+1,0) {$\alpha_1^4$};
\node at (7+1,-1) {$\alpha_1^5$};

\node at (2,0) {$\alpha_0^0$};
\node at (2+7,0) {$\alpha_0^4$};

\node at (3,0) {$\alpha_0^1$};
\node at (3+7,0) {$\alpha_0^5$};

\node at (4,0) {$\alpha_1^2$};
\node at (4,-1) {$\alpha_1^3$};
\node at (7+4,0) {$\alpha_1^6$};
\node at (7+4,-1) {$\alpha_1^7$};

\node at (5,0) {$\alpha_0^2$};
\node at (5+7,0) {$\alpha_0^6$};

\node at (6,0) {$\alpha_0^3$};
\node at (6+7,0) {$\alpha_0^7$};

\foreach \x in {0,...,13}
{
\draw [very thin] (\x+.5,1.5) -- (\x+.5,-4.5);
}

\foreach \x in {-1}
{
\draw [very thick] (\x+.5,1.5) -- (\x+.5,-4.5);
}

\foreach \x in {-3,...,-1}
{
\draw [very thin] (-2.5,\x+.5) -- (13.5,\x+.5);
}

\foreach \x in {-4,0}
{
\draw [very thick] (-2.5,\x+.5) -- (13.5,\x+.5);
}

\node at (0,1) {$1$};
\node at (1,1) {$2$};
\node at (2,1) {$3$};
\node at (3,1) {$4$};
\node at (4,1) {$5$};
\node at (5,1) {$6$};
\node at (6,1) {$7$};
\node at (7,1) {$8$};
\node at (8,1) {$9$};
\node at (9,1) {$10$};
\node at (10,1) {$11$};
\node at (11,1) {$12$};
\node at (12,1) {$13$};
\node at (13,1) {$14$};

\node at (2,-4) {$\hat{u}_0$};
\node at (3,-4) {$\hat{u}_1$};
\node at (5,-4) {$\hat{u}_2$};
\node at (6,-4) {$\hat{u}_3$};
\node at (9,-4) {$\hat{u}_4$};
\node at (10,-4) {$\hat{u}_5$};
\node at (12,-4) {$\hat{u}_6$};
\node at (13,-4) {$\hat{u}_7$};

\end{tikzpicture}
\caption{Fully-parallel ($P=4$).}
\label{fig:pcSchFP}
\end{subfigure}
\begin{subfigure}{\textwidth}
\centering
\begin{tikzpicture}[scale=.6]

\node at (-1.5,1) {time};
\node at (-1.5,0) {$\text{PE}_1$};
\node at (-1.5,-1) {$\text{PE}_2$};
\node at (-1.5,-2) {output};

\node at (0,0) {$\alpha_2^0$};
\node at (0,-1) {$\alpha_2^1$};
\node at (1,0) {$\alpha_2^2$};
\node at (1,-1) {$\alpha_2^3$};

\node at (2,0) {$\alpha_1^0$};
\node at (2,-1) {$\alpha_1^1$};

\node at (3,0) {$\alpha_0^0$};

\node at (4,0) {$\alpha_0^1$};

\node at (5,0) {$\alpha_1^2$};
\node at (5,-1) {$\alpha_1^3$};

\node at (6,0) {$\alpha_0^2$};

\node at (7,0) {$\alpha_0^3$};

\node at (8,0) {$\alpha_2^4$};
\node at (8,-1) {$\alpha_2^5$};
\node at (9,0) {$\alpha_2^6$};
\node at (9,-1) {$\alpha_2^7$};

\node at (10,0) {$\alpha_1^4$};
\node at (10,-1) {$\alpha_1^5$};

\node at (11,0) {$\alpha_0^4$};

\node at (12,0) {$\alpha_0^5$};

\node at (13,0) {$\alpha_1^6$};
\node at (13,-1) {$\alpha_1^7$};

\node at (14,0) {$\alpha_0^6$};

\node at (15,0) {$\alpha_0^7$};

\foreach \x in {0,...,15}
{
\draw [very thin] (\x+.5,1.5) -- (\x+.5,-2.5);
}

\foreach \x in {-1}
{
\draw [very thick] (\x+.5,1.5) -- (\x+.5,-2.5);
}

\foreach \x in {-1}
{
\draw [very thin] (-2.5,\x+.5) -- (15.5,\x+.5);
}

\foreach \x in {-2,0}
{
\draw [very thick] (-2.5,\x+.5) -- (15.5,\x+.5);
}

\node at (0,1) {$1$};
\node at (1,1) {$2$};
\node at (2,1) {$3$};
\node at (3,1) {$4$};
\node at (4,1) {$5$};
\node at (5,1) {$6$};
\node at (6,1) {$7$};
\node at (7,1) {$8$};
\node at (8,1) {$9$};
\node at (9,1) {$10$};
\node at (10,1) {$11$};
\node at (11,1) {$12$};
\node at (12,1) {$13$};
\node at (13,1) {$14$};
\node at (14,1) {$15$};
\node at (15,1) {$16$};

\node at (3,-2) {$\hat{u}_0$};
\node at (4,-2) {$\hat{u}_1$};
\node at (6,-2) {$\hat{u}_2$};
\node at (7,-2) {$\hat{u}_3$};
\node at (11,-2) {$\hat{u}_4$};
\node at (12,-2) {$\hat{u}_5$};
\node at (14,-2) {$\hat{u}_6$};
\node at (15,-2) {$\hat{u}_7$};

\end{tikzpicture}
\caption{Semi-parallel ($P=2$).}
\label{fig:pcSchSP}
\end{subfigure}
\begin{subfigure}{\textwidth}
\centering
\begin{tikzpicture}[scale=.6]

\node at (-1.5,1) {time};
\node at (-1.5,0) {$\text{PE}_1$};
\node at (-1.5,-1) {output};

\node at (0,0) {$\alpha_2^0$};
\node at (1,0) {$\alpha_2^1$};
\node at (2,0) {$\alpha_2^2$};
\node at (3,0) {$\alpha_2^3$};

\node at (4,0) {$\alpha_1^0$};
\node at (5,0) {$\alpha_1^1$};

\node at (6,0) {$\alpha_0^0$};

\node at (7,0) {$\alpha_0^1$};

\node at (8,0) {$\alpha_1^2$};
\node at (9,0) {$\alpha_1^3$};

\node at (10,0) {$\alpha_0^2$};

\node at (11,0) {$\alpha_0^3$};

\node at (12,0) {$\alpha_2^4$};
\node at (13,0) {$\alpha_2^5$};
\node at (14,0) {$\alpha_2^6$};
\node at (15,0) {$\alpha_2^7$};

\node at (16,0) {$\alpha_1^4$};
\node at (17,0) {$\alpha_1^5$};

\node at (18,0) {$\alpha_0^4$};

\node at (19,0) {$\alpha_0^5$};

\node at (20,0) {$\alpha_1^6$};
\node at (21,0) {$\alpha_1^7$};

\node at (22,0) {$\alpha_0^6$};

\node at (23,0) {$\alpha_0^7$};

\foreach \x in {0,...,23}
{
\draw [very thin] (\x+.5,1.5) -- (\x+.5,-1.5);
}

\foreach \x in {-1}
{
\draw [very thick] (\x+.5,1.5) -- (\x+.5,-1.5);
}

\foreach \x in {-1,0}
{
\draw [very thick] (-2.5,\x+.5) -- (23.5,\x+.5);
}

\node at (0,1) {$1$};
\node at (1,1) {$2$};
\node at (2,1) {$3$};
\node at (3,1) {$4$};
\node at (4,1) {$5$};
\node at (5,1) {$6$};
\node at (6,1) {$7$};
\node at (7,1) {$8$};
\node at (8,1) {$9$};
\node at (9,1) {$10$};
\node at (10,1) {$11$};
\node at (11,1) {$12$};
\node at (12,1) {$13$};
\node at (13,1) {$14$};
\node at (14,1) {$15$};
\node at (15,1) {$16$};
\node at (16,1) {$17$};
\node at (17,1) {$18$};
\node at (18,1) {$19$};
\node at (19,1) {$20$};
\node at (20,1) {$21$};
\node at (21,1) {$22$};
\node at (22,1) {$23$};
\node at (23,1) {$24$};

\node at (6,-1) {$\hat{u}_0$};
\node at (7,-1) {$\hat{u}_1$};
\node at (10,-1) {$\hat{u}_2$};
\node at (11,-1) {$\hat{u}_3$};
\node at (18,-1) {$\hat{u}_4$};
\node at (19,-1) {$\hat{u}_5$};
\node at (22,-1) {$\hat{u}_6$};
\node at (23,-1) {$\hat{u}_7$};

\end{tikzpicture}
\caption{Fully-serial ($P=1$).}
\label{fig:pcSchFS}
\end{subfigure}
\caption{SC decoding schedule for a polar code with $N=8$.}
\label{fig:pcSch}
\end{figure}

The latency of SC decoding can be represented on a binary tree by assigning \emph{decoding weights} to each edge based on the value of $P$, as illustrated in Fig.~\ref{fig:scDecLatency}. At each edge of the decoding tree that connects a node at level $s+1$ to a node at level $s$, the decoding weight is calculated as $\ceil{\frac{2^{s}}{P}}$, where $P$ is assumed to be a positive integer. In Fig.~\ref{fig:sc-decFP}'s fully-parallel implementation, all the edges have a decoding weight of $1$ since all the parallelizable operations are performed in parallel. However, in a fully-serial implementation of Fig.~\ref{fig:sc-decFS}, the edges at the top of the SC decoding tree consume more time steps, thus their decoding weights are larger. Using the binary tree representation, the latency of SC decoding can be calculated by adding the decoding weights on all the edges. Note that in a fully-parallel implementation, $\mathcal{L} = 2N-2$, and in a fully-serial implementation, $\mathcal{L} = N\log_2 N$. The latency in a fully-serial implementation is also the decoding complexity.

\begin{figure}[t]
\centering
\begin{subfigure}{.32\textwidth}
\centering
\begin{tikzpicture}[scale=1.2, thick]
  \draw [fill=lightgray] (0,0) circle [radius=.05];

  \draw [fill=lightgray] (-1,-.5) circle [radius=.05];
  \draw [fill=lightgray] (1,-.5) circle [radius=.05];

  \draw [fill=lightgray] (-1.5,-1) circle [radius=.05];
  \draw [fill=lightgray] (-.5,-1) circle [radius=.05];
  \draw [fill=lightgray] (.5,-1) circle [radius=.05];
  \draw [fill=lightgray] (1.5,-1) circle [radius=.05];

  \draw (-1.75,-1.5) circle [radius=.05];
  \draw (-1.25,-1.5) circle [radius=.05];
  \draw (-.75,-1.5) circle [radius=.05];
  \draw [fill=black] (-.25,-1.5) circle [radius=.05];
  \draw (.25,-1.5) circle [radius=.05];
  \draw [fill=black] (.75,-1.5) circle [radius=.05];
  \draw [fill=black] (1.25,-1.5) circle [radius=.05];
  \draw [fill=black] (1.75,-1.5) circle [radius=.05];

  \node at (-1.75,-1.7) {$\hat{u}_0$};
  \node at (-1.25,-1.7) {$\hat{u}_1$};
  \node at (-.75,-1.7) {$\hat{u}_2$};
  \node at (-.25,-1.7) {$\hat{u}_3$};
  \node at (.25,-1.7) {$\hat{u}_4$};
  \node at (.75,-1.7) {$\hat{u}_5$};
  \node at (1.25,-1.7) {$\hat{u}_6$};
  \node at (1.75,-1.7) {$\hat{u}_7$};

  \draw (0,-.05) -- (-1,-.45) node[pos=0.5,sloped,above] {$1$};
  \draw (0,-.05) -- (1,-.45) node[pos=0.5,sloped,above] {$1$};

  \draw (-1,-.55) -- (-1.5,-.95) node[pos=0.5,sloped,above] {$1$};
  \draw (-1,-.55) -- (-.5,-.95) node[pos=0.5,sloped,above] {$1$};
  \draw (1,-.55) -- (.5,-.95) node[pos=0.5,sloped,above] {$1$};
  \draw (1,-.55) -- (1.5,-.95) node[pos=0.5,sloped,above] {$1$};

  \draw (-1.5,-1.05) -- (-1.75,-1.45) node[pos=0.5,sloped,above] {$1$};
  \draw (-1.5,-1.05) -- (-1.25,-1.45) node[pos=0.5,sloped,above] {$1$};
  \draw (-.5,-1.05) -- (-.75,-1.45) node[pos=0.5,sloped,above] {$1$};
  \draw (-.5,-1.05) -- (-.25,-1.45) node[pos=0.5,sloped,above] {$1$};
  \draw (.5,-1.05) -- (.25,-1.45) node[pos=0.5,sloped,above] {$1$};
  \draw (.5,-1.05) -- (.75,-1.45) node[pos=0.5,sloped,above] {$1$};
  \draw (1.5,-1.05) -- (1.25,-1.45) node[pos=0.5,sloped,above] {$1$};
  \draw (1.5,-1.05) -- (1.75,-1.45) node[pos=0.5,sloped,above] {$1$};

  \draw [thin,dotted] (-2,0) -- (2,0);
  \draw [thin,dotted] (-2,-.5) -- (2,-.5);
  \draw [thin,dotted] (-2,-1) -- (2,-1);
  \draw [thin,dotted] (-2,-1.5) -- (2,-1.5);

\end{tikzpicture}
\caption{Fully-parallel ($P=4$).}
\label{fig:sc-decFP}
\end{subfigure}
\begin{subfigure}{.32\textwidth}
\centering
\begin{tikzpicture}[scale=1.2, thick]
  \draw [fill=lightgray] (0,0) circle [radius=.05];

  \draw [fill=lightgray] (-1,-.5) circle [radius=.05];
  \draw [fill=lightgray] (1,-.5) circle [radius=.05];

  \draw [fill=lightgray] (-1.5,-1) circle [radius=.05];
  \draw [fill=lightgray] (-.5,-1) circle [radius=.05];
  \draw [fill=lightgray] (.5,-1) circle [radius=.05];
  \draw [fill=lightgray] (1.5,-1) circle [radius=.05];

  \draw (-1.75,-1.5) circle [radius=.05];
  \draw (-1.25,-1.5) circle [radius=.05];
  \draw (-.75,-1.5) circle [radius=.05];
  \draw [fill=black] (-.25,-1.5) circle [radius=.05];
  \draw (.25,-1.5) circle [radius=.05];
  \draw [fill=black] (.75,-1.5) circle [radius=.05];
  \draw [fill=black] (1.25,-1.5) circle [radius=.05];
  \draw [fill=black] (1.75,-1.5) circle [radius=.05];

  \node at (-1.75,-1.7) {$\hat{u}_0$};
  \node at (-1.25,-1.7) {$\hat{u}_1$};
  \node at (-.75,-1.7) {$\hat{u}_2$};
  \node at (-.25,-1.7) {$\hat{u}_3$};
  \node at (.25,-1.7) {$\hat{u}_4$};
  \node at (.75,-1.7) {$\hat{u}_5$};
  \node at (1.25,-1.7) {$\hat{u}_6$};
  \node at (1.75,-1.7) {$\hat{u}_7$};

  \draw (0,-.05) -- (-1,-.45) node[pos=0.5,sloped,above] {$2$};
  \draw (0,-.05) -- (1,-.45) node[pos=0.5,sloped,above] {$2$};

  \draw (-1,-.55) -- (-1.5,-.95) node[pos=0.5,sloped,above] {$1$};
  \draw (-1,-.55) -- (-.5,-.95) node[pos=0.5,sloped,above] {$1$};
  \draw (1,-.55) -- (.5,-.95) node[pos=0.5,sloped,above] {$1$};
  \draw (1,-.55) -- (1.5,-.95) node[pos=0.5,sloped,above] {$1$};

  \draw (-1.5,-1.05) -- (-1.75,-1.45) node[pos=0.5,sloped,above] {$1$};
  \draw (-1.5,-1.05) -- (-1.25,-1.45) node[pos=0.5,sloped,above] {$1$};
  \draw (-.5,-1.05) -- (-.75,-1.45) node[pos=0.5,sloped,above] {$1$};
  \draw (-.5,-1.05) -- (-.25,-1.45) node[pos=0.5,sloped,above] {$1$};
  \draw (.5,-1.05) -- (.25,-1.45) node[pos=0.5,sloped,above] {$1$};
  \draw (.5,-1.05) -- (.75,-1.45) node[pos=0.5,sloped,above] {$1$};
  \draw (1.5,-1.05) -- (1.25,-1.45) node[pos=0.5,sloped,above] {$1$};
  \draw (1.5,-1.05) -- (1.75,-1.45) node[pos=0.5,sloped,above] {$1$};

  \draw [thin,dotted] (-2,0) -- (2,0);
  \draw [thin,dotted] (-2,-.5) -- (2,-.5);
  \draw [thin,dotted] (-2,-1) -- (2,-1);
  \draw [thin,dotted] (-2,-1.5) -- (2,-1.5);

\end{tikzpicture}
\caption{Semi-parallel ($P=2$).}
\label{fig:sc-decSP}
\end{subfigure}
\begin{subfigure}{.32\textwidth}
\centering
\begin{tikzpicture}[scale=1.2, thick]
  \draw [fill=lightgray] (0,0) circle [radius=.05];

  \draw [fill=lightgray] (-1,-.5) circle [radius=.05];
  \draw [fill=lightgray] (1,-.5) circle [radius=.05];

  \draw [fill=lightgray] (-1.5,-1) circle [radius=.05];
  \draw [fill=lightgray] (-.5,-1) circle [radius=.05];
  \draw [fill=lightgray] (.5,-1) circle [radius=.05];
  \draw [fill=lightgray] (1.5,-1) circle [radius=.05];

  \draw (-1.75,-1.5) circle [radius=.05];
  \draw (-1.25,-1.5) circle [radius=.05];
  \draw (-.75,-1.5) circle [radius=.05];
  \draw [fill=black] (-.25,-1.5) circle [radius=.05];
  \draw (.25,-1.5) circle [radius=.05];
  \draw [fill=black] (.75,-1.5) circle [radius=.05];
  \draw [fill=black] (1.25,-1.5) circle [radius=.05];
  \draw [fill=black] (1.75,-1.5) circle [radius=.05];

  \node at (-1.75,-1.7) {$\hat{u}_0$};
  \node at (-1.25,-1.7) {$\hat{u}_1$};
  \node at (-.75,-1.7) {$\hat{u}_2$};
  \node at (-.25,-1.7) {$\hat{u}_3$};
  \node at (.25,-1.7) {$\hat{u}_4$};
  \node at (.75,-1.7) {$\hat{u}_5$};
  \node at (1.25,-1.7) {$\hat{u}_6$};
  \node at (1.75,-1.7) {$\hat{u}_7$};

  \draw (0,-.05) -- (-1,-.45) node[pos=0.5,sloped,above] {$4$};
  \draw (0,-.05) -- (1,-.45) node[pos=0.5,sloped,above] {$4$};

  \draw (-1,-.55) -- (-1.5,-.95) node[pos=0.5,sloped,above] {$2$};
  \draw (-1,-.55) -- (-.5,-.95) node[pos=0.5,sloped,above] {$2$};
  \draw (1,-.55) -- (.5,-.95) node[pos=0.5,sloped,above] {$2$};
  \draw (1,-.55) -- (1.5,-.95) node[pos=0.5,sloped,above] {$2$};

  \draw (-1.5,-1.05) -- (-1.75,-1.45) node[pos=0.5,sloped,above] {$1$};
  \draw (-1.5,-1.05) -- (-1.25,-1.45) node[pos=0.5,sloped,above] {$1$};
  \draw (-.5,-1.05) -- (-.75,-1.45) node[pos=0.5,sloped,above] {$1$};
  \draw (-.5,-1.05) -- (-.25,-1.45) node[pos=0.5,sloped,above] {$1$};
  \draw (.5,-1.05) -- (.25,-1.45) node[pos=0.5,sloped,above] {$1$};
  \draw (.5,-1.05) -- (.75,-1.45) node[pos=0.5,sloped,above] {$1$};
  \draw (1.5,-1.05) -- (1.25,-1.45) node[pos=0.5,sloped,above] {$1$};
  \draw (1.5,-1.05) -- (1.75,-1.45) node[pos=0.5,sloped,above] {$1$};

  \draw [thin,dotted] (-2,0) -- (2,0);
  \draw [thin,dotted] (-2,-.5) -- (2,-.5);
  \draw [thin,dotted] (-2,-1) -- (2,-1);
  \draw [thin,dotted] (-2,-1.5) -- (2,-1.5);

\end{tikzpicture}
\caption{Fully-serial ($P=1$).}
\label{fig:sc-decFS}
\end{subfigure}
\caption{Decoding weights on a SC decoding tree for a polar code with $N=8$ and $R=1/2$.}
\label{fig:scDecLatency}
\end{figure}
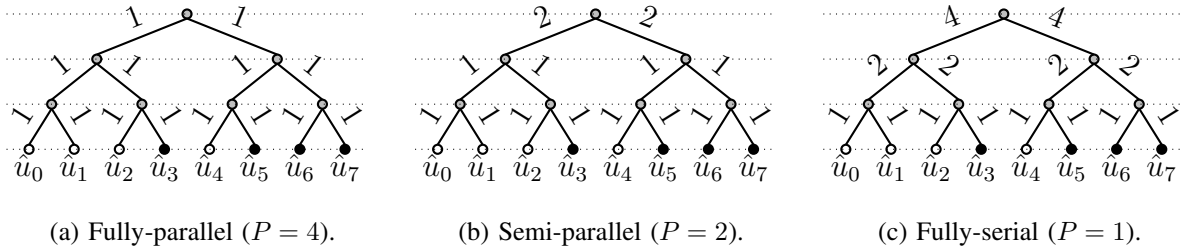

\subsection{Simplified Successive-Cancellation Decoding} \label{subsec:simplified}

The SSC decoding algorithm \cite{alamdar} identifies two types of nodes in the SC decoding tree. The bits within each node can be decoded efficiently in one shot without traversing its descendent nodes. These two types of nodes are:
\begin{itemize}
    \item \emph{Rate-0 node}: A node at level $s$ of the SC decoding tree all of whose leaf nodes at level $0$ are frozen bits. For a Rate-0 node at level $s$, bit estimates can be directly calculated at the level where the node is located as
    \begin{equation}
        \beta^i_s = 0 \text{.}
    \end{equation}
    \item \emph{Rate-1 node}: A node at level $s$ of the SC decoding tree whose leaf nodes at level $0$ are all information bits. For a Rate-1 node at level $s$, the bit estimations can be directly calculated at the level where the node is located as
    \begin{equation}
        \beta^i_s =
        \begin{cases}
        0 & \text{if $\alpha^i_s>0$,}\\
        1 & \text{otherwise.}
        \end{cases}
    \end{equation}
\end{itemize}

This paper considers a non-systematic polar code, whose information bits appear at level $0$. A non-systematic polar code requires hard decisions to calculate the information bits at level $0$ from the estimated bits at an intermediate level where a Rate-0 or a Rate-1 node is located. However, the bit-wise calculations are usually conducted in the same time step in which the LLR values are calculated, because the bit-wise calculations are much faster than the LLR calculations. Moreover, if a systematic polar code \cite{AriSys2011} (whose information bits appear at level $n$) is considered, there is no need to calculate the bit values at the leaf nodes because the information is present in the root node of the decoding tree. In fact, SSC decoding can decode Rate-0 and Rate-1 nodes in a single time step. In a binary tree representation of SC decoding, this corresponds to pruning all the nodes that are the descendants of a Rate-0 node or a Rate-1 node, as illustrated in Fig.~\ref{fig:sscDec}.

For practical code lengths, SSC decoding has a significantly lower latency than SC decoding \cite{alamdar}. This is due to the fact that the number of edges in the SSC decoding tree is significantly smaller than the number of edges in the SC decoding tree. Further, the latency of SSC decoding can be calculated by adding all the decoding weights in its (pruned) binary tree representation (as done in the case of SC decoding).

\begin{figure}[t]
\centering
\begin{tikzpicture}[scale=1.9, thick]
  \draw [fill=lightgray] (0,0) circle [radius=.05];

  \draw [fill=lightgray] (-1,-.5) circle [radius=.05];
  \draw [fill=lightgray] (1,-.5) circle [radius=.05];

  \draw (-1.5,-1) circle [radius=.05];
  \draw [fill=lightgray] (-.5,-1) circle [radius=.05];
  \draw [fill=lightgray] (.5,-1) circle [radius=.05];
  \draw [fill=black] (1.5,-1) circle [radius=.05];

  \draw (-.75,-1.5) circle [radius=.05];
  \draw [fill=black] (-.25,-1.5) circle [radius=.05];
  \draw (.25,-1.5) circle [radius=.05];
  \draw [fill=black] (.75,-1.5) circle [radius=.05];

  \draw (0,-.05) -- (-1,-.45);
  \draw (0,-.05) -- (1,-.45);

  \draw (-1,-.55) -- (-1.5,-.95);
  \draw (-1,-.55) -- (-.5,-.95);
  \draw (1,-.55) -- (.5,-.95);
  \draw (1,-.55) -- (1.5,-.95);

  \draw (-.5,-1.05) -- (-.75,-1.45);
  \draw (-.5,-1.05) -- (-.25,-1.45);
  \draw (.5,-1.05) -- (.25,-1.45);
  \draw (.5,-1.05) -- (.75,-1.45);

  \draw [thin,dotted] (-2,0) -- (2,0);
  \draw [thin,dotted] (-2,-.5) -- (2,-.5);
  \draw [thin,dotted] (-2,-1) -- (2,-1);
  \draw [thin,dotted] (-2,-1.5) -- (2,-1.5);

  \node at (2.25,0) {$s=3$};
  \node at (2.25,-.5) {$s=2$};
  \node at (2.25,-1) {$s=1$};
  \node at (2.25,-1.5) {$s=0$};
\end{tikzpicture}
\caption{Binary tree representation of SSC decoding for a polar code with $N=8$ and $R=1/2$. The white nodes represent Rate-0 nodes, the black nodes represent Rate-1 nodes, and the gray nodes are neither Rate-0 nodes nor Rate-1 nodes.}
\label{fig:sscDec}
\end{figure}
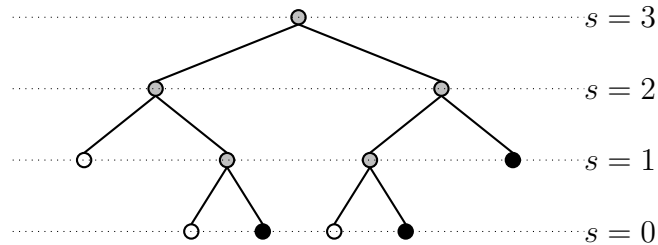

\section{Latency of SSC Decoding with Limited Parallelism} \label{sec:main}


\begin{theorem}[Latency of SSC Decoder with Limited Parallelism]\label{thm:latency_p_parallel}
	Let $W$ be a given BMS channel with symmetric capacity $I(W)$. Fix $p_e$ and design a sequence of polar codes $\mathcal{C}_{\text{polar}}(p_e,W,N)$ of increasing block lengths with rates approaching $I(W)$, as per Definition \ref{def:construction}. Then, for any $\epsilon>0$, there exists $\bar{N}(\epsilon)$ such that, for any $N\ge \bar{N}(\epsilon)$, the latency of the SSC decoder with $P$ processing elements is upper bounded by
	\begin{align}\label{eq:mainres}
	c\,N^{1-1/\mu} + (2+\epsilon)\frac{N}{P}\log_2 \log_2 \frac{N}{P},
	\end{align}
	where $c>0$ is an absolute constant (independent of $N, P, p_e, \epsilon$ and $W$).
\end{theorem}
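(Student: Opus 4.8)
The plan is to turn the latency into a weighted count of the nodes that survive the SSC pruning, and then to control that count through the scaling-exponent machinery of Definition~\ref{def:upscal}. Let $V_s$ be the number of level-$s$ nodes that survive pruning (no strict ancestor is a Rate-0 or Rate-1 node); summing the edge weights $\lceil 2^s/P\rceil$ gives the exact expression $\mathcal L=\sum_{s=0}^{n-1}V_s\lceil 2^s/P\rceil$, where $N=2^n$. The key structural fact I would establish first is the identity $V_s=2M_{s+1}$, where $M_{s'}$ is the number of \emph{mixed} level-$s'$ nodes (neither Rate-0 nor Rate-1): a node is mixed only if every ancestor is mixed (a Rate-0 ancestor forces all descendants frozen, and symmetrically for Rate-1), so the surviving level-$s$ nodes are exactly the two children of each mixed level-$(s+1)$ node. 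Thus
\[
\mathcal L \;=\; 2\sum_{s'=1}^{n} M_{s'}\left\lceil \frac{2^{s'-1}}{P}\right\rceil .
\]

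Next I would split this sum at the level $s_0$ where $2^{s_0}\approx P$. On the low levels ($2^{s'-1}<P$) the ceiling is $1$, so their total is at most $2\sum_{s'}M_{s'}$, i.e.\ at most twice the number of nodes in the pruned tree, which is $O(N^{1-1/\mu})$ by \cite{mondelli2020sublinear} and is absorbed into $c\,N^{1-1/\mu}$. On the high levels I would write $\lceil 2^{s'-1}/P\rceil\le 2^{s'-1}/P+1$; the ``$+1$'' again contributes only $2\sum_{s'}M_{s'}=O(N^{1-1/\mu})$, while the explicit factor $2$ cancels against $2^{s'-1}=\tfrac12 2^{s'}$, leaving the main piece
\[
\frac1P\sum_{s'\,:\,2^{s'}\ge 2P}M_{s'}\,2^{s'}
\;=\;\frac NP\sum_{s'\,:\,2^{s'}\ge 2P}\Pr\big[\text{a uniform level-}s'\text{ node is mixed}\big],
\]
since the $2^{n-s'}$ equiprobable level-$s'$ nodes each cover $2^{s'}$ leaves. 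Crucially, the constant $2$ of the theorem is \emph{not} the combinatorial factor $V_s=2M_{s+1}$ (which has just cancelled); it will come from the expected value of the sum above.

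It remains to prove the crux,
\[
\sum_{s'\,:\,2^{s'}\ge 2P}\Pr\big[\text{level-}s'\text{ node mixed}\big]\;\le\;(2+\epsilon)\log_2\log_2\frac NP .
\]
Writing $Z=Z_{n-s'}$ for the Bhattacharyya parameter of a random level-$s'$ node, its best (all-``$+$'') descendant is an information bit iff $Z^{2^{s'}}<p_e/N$, and having a frozen descendant forces $2^{s'}Z\ge p_e/N$ (using $Z(W^0)\le 2Z$), so
\[
\Pr[\text{mixed}] \;\le\; \Pr\!\Big[\tfrac{p_e}{N2^{s'}}\le Z_{n-s'}<(p_e/N)^{2^{-s'}}\Big].
\]
Summed over all $s'$, the left-hand side equals the expected depth, measured from the root, of the pruned-tree leaf covering a uniformly random leaf channel, so the whole problem reduces to bounding this expected absorption depth by $(2+o(1))\log_2\log_2(N/P)$.

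This moderate-deviations estimate is the part I expect to be hardest. The scaling exponent of Definition~\ref{def:upscal} bounds $\Pr[Z_{n-s'}\in(\delta,1-\delta)]$ by $C\,2^{-(n-s')/\mu}$, so a path spends only $O(1)$ expected time in the bulk and this contributes nothing to the leading order. Once the path commits to a boundary it is absorbed only when $-\log_2 Z$ reaches $\approx 2\log_2 N$ (so that even the worst of its $\approx 2^{n}$ descendants is an information bit) or $-\log_2(1-Z)$ reaches $\approx\log_2 N$; but a ``$+$'' step doubles $-\log_2 Z$ while a ``$-$'' step merely decrements it, and symmetrically near $1$, so only half the steps advance the relevant quantity. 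Since the required $\approx\log_2\log_2 N$ doublings occur on only half of the steps, the expected depth is $\approx 2\log_2\log_2 N$, which is exactly the source of the constant $2$; replacing $N$ by the effective length $N/P$ available above level $s_0$ gives the stated $\log_2\log_2(N/P)$. The delicate point, and where the scaling-exponent bound must be combined with the fast (error-exponent) polarization of $Z_{n-s'}$, is to show uniformly in $N$, $P$ and $W$ that the atypical paths which remain mixed well beyond $2\log_2\log_2(N/P)$ contribute a vanishing fraction of the expected depth.
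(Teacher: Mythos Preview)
Your decomposition is correct and leads to a genuinely different argument from the paper's. The paper splits the tree at depth $\lceil\log_2(N/P)\rceil$ into $\mathcal F_1$ (top) and $\mathcal F_2$ (bottom), treats three ranges of $P$ separately, and inside $\mathcal F_1$ performs a \emph{two}-level pruning at depths $k_1=\lceil(2+\epsilon)\log_2\log_2(N/P)\rceil$ and $k_1+k_2$ (with $k_2\approx 100\log_2\log_2(N/P)$), bounding the number of surviving nodes at those two depths via Lemma~\ref{lemma:unpolarized} and then counting all edges in the intermediate unpruned subtrees. Your route---writing the latency as $2\sum_{s'}M_{s'}\lceil 2^{s'-1}/P\rceil$, peeling off the ``$+1$'' of the ceiling into a global $2\sum_{s'}M_{s'}=O(N^{1-1/\mu})$, and reducing the remainder to $(N/P)$ times the truncated expected absorption depth $\sum_{m\le\log_2(N/P)}\Pr[\text{depth-}m\text{ node mixed}]$---is cleaner: because you track $M_{s'}$ at every level rather than only at two cut depths, you avoid the second pruning layer and most of the case analysis. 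The price is that you must control $\Pr[\text{mixed at depth }m]$ for each $m$, whereas the paper only needs it at two values of $m$.

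Two points need tightening. First, your one-sided inclusion $\{\text{mixed}\}\subseteq\{p_e/(N2^{s'})\le Z_{n-s'}<(p_e/N)^{2^{-s'}}\}$ is correct but the right endpoint is essentially $1$ when $s'$ is large, so it does not by itself put you inside a symmetric interval $(\delta,1-\delta)$; you still need the analogue of Lemma~\ref{lemma:rate01} to conclude that $Z\ge 1-1/N^3$ forces Rate-$0$. Second, and more importantly, the ``delicate point'' you flag is exactly the content of the paper's Lemma~\ref{lemma:unpolarized}: for every $\gamma\in(\tfrac{1}{1+\mu},1)$ one has $\Pr\big(Z_m\in[2^{-2^{m\gamma h_2^{-1}(\cdot)}},\,1-2^{-2^{m\gamma h_2^{-1}(\cdot)}}]\big)\le c\,2^{-m(1-\gamma)/\mu}$, and $\lim_{\gamma\to 1}\gamma\,h_2^{-1}\!\big(\tfrac{\gamma(\mu+1)-1}{\gamma\mu}\big)=\tfrac12$. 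This is precisely the rigorous version of your ``only half the steps double $-\log_2 Z$'' heuristic, and it is what pins down the constant $2$: choosing $\gamma$ close to $1$ makes the Lemma~\ref{lemma:unpolarized} interval contain $[1/N^3,1-1/N^3]$ as soon as $m\ge(2+\epsilon)\log_2\log_2(N/P)$, after which $\sum_{m>k_1}\Pr[\text{mixed}]$ is a convergent geometric tail. Your sketch has the right mechanism, but without a statement of this type the argument is incomplete. Finally, your ``replace $N$ by $N/P$'' shortcut hides a small case split: when $P\ge N^{0.99}$ the truncated sum $\sum_{m\le\log_2(N/P)}\Pr[\text{mixed}]$ need not be $\le(2+\epsilon)\log_2\log_2(N/P)$, but it is trivially $\le\log_2(N/P)$ and then $(N/P)\log_2(N/P)=o(N^{1-1/\mu})$, so the whole main piece is absorbed into $cN^{1-1/\mu}$.
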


Some remarks are in order. First, note that, in a fully-serial implementation with $P=1$, the upper bound \eqref{eq:mainres} reduces to
\begin{align}
	(2+o(1))N\log_2\log_2 N.
	\label{eq:constant}
	\end{align}
Furthermore, if $P=N^{1/\mu}$, then \eqref{eq:mainres} is
	\begin{align}
	\tilde{O}(N^{1-1/\mu}),
\end{align}
where the $\tilde{O}$ notation hides (log-)logarithmic factors. Recall that the latency of a fully-parallel implementation of the SSC decoder is $O(N^{1-1/\mu})$, see Theorem~1 of \cite{mondelli2020sublinear}. Thus, another immediate consequence of Theorem~\ref{thm:latency_p_parallel} is that $P\sim N^{1/\mu}$ suffices to get roughly the same latency as $P=N/2$, and this is the smallest such $P$. 
	
The key idea of the proof is to look at various levels of the decoding tree and approximate the number of nodes whose corresponding bit-channels are already polarized beyond a certain threshold. Such nodes will be pruned, thus reducing the total weight of the tree. A similar idea (though with a different pruning strategy) appears in \cite{mondelli2020sublinear}. However, our earlier work in \cite{mondelli2020sublinear} considers only the fully-parallel setting where $P=N/2$.

Before proceeding with the proof, two intermediate lemmas are required. The first one is a two-sided version of the bound on $Z_n$, as defined in (\ref{eq:eqBMSC}), leading to Theorem~3 in \cite{MHU15unif-ieeeit}. Its proof appears in Appendix~\ref{app:prf}. 

\begin{lemma}[Refined bound on number of un-polarized channels]\label{lemma:unpolarized}
Let $W$ be a BMS channel and let 
$Z_n=Z(W_n)$ be the random process that tracks the Bhattacharyya parameter of $W_n$. Let $\mu$ be an upper bound on the scaling exponent according to Definition~\ref{def:upscal}. Fix $\gamma\in\left(\frac{1}{1+\mu}, 1\right)$. Then, for $n\ge 1$, 
\begin{equation}\label{eq:nummid}
    \mathbb P\left(Z_n\in \left[2^{-2^{n\gamma h_2^{(-1)}\left(\frac{\gamma(\mu+1)-1}{\gamma\mu}\right)}}, 1-2^{-2^{n\gamma h_2^{(-1)}\left(\frac{\gamma(\mu+1)-1}{\gamma\mu}\right)}}\right]\right) \le c\,2^{-n(1-\gamma)/\mu},
\end{equation}
where $c$ is a numerical constant that does not depend on $n$, $W$, or $\gamma$, and $h_2^{(-1)}$ is the inverse of the binary entropy function $h_2(x)=-x\log_2 x-(1-x)\log_2 (1-x)$ for $x\in [0, 1/2]$.
\end{lemma}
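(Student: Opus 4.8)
The plan is to split the $n$ polarization steps into two phases and to bound, separately, the probability that $Z_n$ stays in the middle interval because polarization has not yet \emph{started} (a scaling-exponent effect) and the probability that it stays there because polarization, once started, has not \emph{finished} (a large-deviations effect). Concretely, fix the split $n=n_1+n_2$ with $n_1=\lceil(1-\gamma)n\rceil$ and $n_2=n-n_1\sim\gamma n$, pick a small constant $\rho\in(0,1/2)$, abbreviate $\delta_n=2^{-2^{n\gamma a}}$ with $a:=h_2^{(-1)}\!\big(\tfrac{\gamma(\mu+1)-1}{\gamma\mu}\big)$, and write
\begin{align*}
\mathbb{P}\big(Z_n\in[\delta_n,1-\delta_n]\big)\le \mathbb{P}\big(Z_{n_1}\in[\rho,1-\rho]\big)+\mathbb{P}\big(Z_n\in[\delta_n,1-\delta_n],\,Z_{n_1}\notin[\rho,1-\rho]\big).
\end{align*}
Note that the hypothesis $\gamma\in(\tfrac{1}{1+\mu},1)$ is exactly what forces the argument of $h_2^{(-1)}$ into $(0,1)$, so that $a\in(0,\tfrac12)$ is well defined; this will matter below.

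For the first term I would invoke the scaling exponent through its witness function $h$ from Definition~\ref{def:upscal}. Since the realized value of the worse channel always lies in $[x\sqrt{2-x^2},\,2x-x^2]$, the supremum bound \eqref{eq:suph} yields the one-step contraction $\mathbb{E}[h(Z_n)\mid Z_{n-1}]\le 2^{-1/\mu}\,h(Z_{n-1})$; iterating and using $h\le 1$ gives $\mathbb{E}[h(Z_{n_1})]\le 2^{-n_1/\mu}$. Because $h$ is continuous and strictly positive on $[\rho,1-\rho]$, Markov's inequality then gives $\mathbb{P}(Z_{n_1}\in[\rho,1-\rho])\le 2^{-n_1/\mu}/\min_{[\rho,1-\rho]}h\le c_1\,2^{-n(1-\gamma)/\mu}$, the factor $1/\min h$ being a constant folded into $c$.

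For the second term I would condition on $Z_{n_1}\le\rho$ (the case $Z_{n_1}\ge1-\rho$ being symmetric under $Z\leftrightarrow 1-Z$, which merely swaps the roles of the $+$ and $-$ branches). In phase~2 a $+$ step squares $Z$ while a $-$ step at most doubles it, so $-\log_2 Z$ doubles on $+$ steps and decreases by at most $1$ on $-$ steps; hence, up to an erosion correction, $Z$ reaches $\delta_n$ once the number $m_+$ of $+$ steps among the $n_2\sim\gamma n$ flips is at least $n\gamma a$. The event $Z_n\ge\delta_n$ therefore forces $m_+\lesssim n\gamma a$, a lower-tail deviation of $\mathrm{Binomial}(n_2,\tfrac12)$ below its mean (recall $a<\tfrac12$), whose probability the Chernoff bound controls by $2^{-n_2(1-h_2(a))}$. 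The choice of $a$ is dictated precisely by matching this exponent to the phase-1 exponent, since
\begin{align*}
1-h_2(a)=1-\frac{\gamma(\mu+1)-1}{\gamma\mu}=\frac{1-\gamma}{\gamma\mu},\qquad\text{so that}\qquad 2^{-n_2(1-h_2(a))}=2^{-n(1-\gamma)/\mu}.
\end{align*}
Summing the two phases would then yield the claimed bound $c\,2^{-n(1-\gamma)/\mu}$.

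The step I expect to be the main obstacle is making the phase-2 estimate fully rigorous: the naive Chernoff count ignores the erosion produced by the $-$ steps, as well as the possibility that a trajectory lingers in the middle rather than committing to an extreme, so the argument must be upgraded to an Ar{\i}kan--Telatar-style rate-of-polarization bound that simultaneously (i) controls the accumulated erosion and (ii) certifies that the trajectories failing to reach $\delta_n$ are exactly those with an atypically small number of polarizing steps, so that the entropy rate $1-h_2(a)$ really is the governing exponent. Ensuring that the intermediate threshold $\rho$ and all such correction terms do not degrade the $2^{-n(1-\gamma)/\mu}$ rate is the delicate part, and is where I expect the bulk of the appendix argument to lie.
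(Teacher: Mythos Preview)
Your two-phase strategy (use the scaling exponent to leave a middle interval, then a Chernoff/rate-of-polarization bound to reach the doubly-exponential threshold) is the paper's approach, and your exponent-matching computation $1-h_2(a)=(1-\gamma)/(\gamma\mu)$ is exactly right. The gap is in the intermediate threshold: a \emph{fixed} constant $\rho$ does not suffice.

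The erosion you flag in your last paragraph is structural, not a technicality. The Ar{\i}kan--Telatar bound used by the paper (Lemma~22 of \cite{HAU14}) reads
\[
\mathbb{P}\!\left(Z_{n_1+n_2}\le 2^{-2^{\sum_{i} B_i}}\ \middle|\ Z_{n_1}=x\right)\ \ge\ 1-c_2\,x(1-\log_2 x),
\]
so starting phase~2 from $x\le\rho$ leaves a failure probability of order $\rho\log_2(1/\rho)$, a constant independent of~$n$. You cannot absorb this by the $Z\leftrightarrow 1-Z$ symmetry: from $Z_{n_1}\le\rho$ with $-\log_2\rho=O(1)$, an $O(1)$ run of initial $-$~steps already pushes $Z$ near~$1$, after which the bookkeeping ``$-\log_2 Z$ doubles on $+$ and drops by~$1$ on~$-$'' yields nothing; controlling those trajectories inside $[\delta_n,1-\delta_n]$ over the remaining $n_2$ steps would require an argument of the same strength as the lemma itself.

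The paper's fix is to make the hand-off threshold \emph{shrink}. Phase~1 (with $n_0=n-\lceil\gamma n\rceil$ steps) invokes Lemma~1 of \cite{mondelli2020sublinear}, which already gives
\[
\mathbb{P}\big(Z_{n_0}\in[2^{-2n_0},\,1-2^{-2n_0}]\big)\le c_1\,2^{-n_0/\mu},
\]
a strictly stronger statement than the $h$-function/Markov step you describe (which only handles a constant interval). Starting phase~2 from $x\le 2^{-n_0}$ then makes the erosion term $c_2\,2^{-n_0}(1+n_0)\le c'\,2^{-n_0/\mu}$ (here one uses $\mu>2$), which matches the Chernoff term $2^{-n_1(1-h_2(a))}=2^{-n(1-\gamma)/\mu}$. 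So your outline stands, but phase~1 must deliver a threshold that is exponentially small in $n_0$, not a constant; this means citing or reproving the stronger prior bound rather than relying on Markov over $[\rho,1-\rho]$ alone.
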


The second intermediate result is stated as Lemma~2 in \cite{mondelli2020sublinear}.

\begin{lemma}[Sufficient condition for Rate-0 and Rate-1 nodes]\label{lemma:rate01}
Let $W$ be a BMS channel, $p_{e}\in (0, 1)$, $N=2^n$, and $M=2^m$ with $m<n$. Consider the polar code $\mathcal C_{\rm polar}(p_{e}/M, W, N/M)$ constructed according to Definition~\ref{def:construction}. Then, there exists an integer $n_0$, which depends on $p_{e}$, such that for all $n\ge n_0$, the following holds:
\begin{enumerate}
    \item If $Z(W)\le 1/N^3$, then the polar code $\mathcal C_{\rm polar}(p_{e}/M, W, N/M)$ has rate $1$.
    \item If $Z(W)\ge 1-1/N^3$, then the polar code $\mathcal C_{\rm polar}(p_{e}/M, W, N/M)$ has rate $0$.
\end{enumerate} 
\end{lemma}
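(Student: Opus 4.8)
The plan is to track the extremal Bhattacharyya parameters across the $n-m$ levels of polarization that build the code $\mathcal{C}_{\rm polar}(p_{e}/M, W, N/M)$. First I would record the construction parameters: this code has block length $N/M = 2^{n-m}$, so its $2^{n-m}$ synthetic channels are obtained by applying the polarization recursion \eqref{eq:mapch} exactly $n-m$ times to $W$, and by Definition~\ref{def:construction} a position is an information bit precisely when its synthetic channel has Bhattacharyya parameter below $\frac{p_{e}/M}{N/M} = p_{e}/N$. Thus the code has rate $1$ if and only if the \emph{largest} synthetic $Z$ is below $p_{e}/N$, and rate $0$ if and only if the \emph{smallest} synthetic $Z$ is at least $p_{e}/N$. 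The two parts then become one-sided estimates on these extremal values.

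For part~1 I would bound the largest synthetic Bhattacharyya parameter from above. The key ingredient is a one-step monotonicity drawn from \eqref{eq:minusB}--\eqref{eq:plusB}: the better (``$+$'') child satisfies $Z(W^1)=Z(W)^2\le Z(W)$, while the worse (``$-$'') child satisfies $Z(W^0)\le 2Z(W)-Z(W)^2\le 2Z(W)$. Hence the maximum of $Z$ over one level is at most twice the maximum over the previous level, and after $n-m$ steps the largest synthetic $Z$ is at most $2^{n-m}Z(W)\le 2^n\cdot N^{-3}=N^{-2}$, using $Z(W)\le N^{-3}$ and $2^{n-m}\le 2^n=N$. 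Requiring $N^{-2}<p_{e}/N$, i.e.\ $N>1/p_{e}$, then puts every synthetic channel below the threshold $p_{e}/N$, so all positions are information bits and the code has rate $1$.

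For part~2 I would symmetrically bound the smallest synthetic Bhattacharyya parameter from below. Here the relevant monotonicity is reversed: the ``$+$'' child decreases $Z$ (it squares it exactly), whereas $Z(W^0)\ge Z(W)\sqrt{2-Z(W)^2}\ge Z(W)$, so a ``$-$'' step never decreases $Z$. An induction on the level then shows that the minimum over a level equals the square of the minimum over the previous level, so the smallest synthetic $Z$ is exactly $Z(W)^{2^{n-m}}$, attained along the all-``$+$'' path. With $Z(W)\ge 1-N^{-3}$ and Bernoulli's inequality this gives $Z(W)^{2^{n-m}}\ge (1-N^{-3})^{2^{n-m}}\ge 1-2^{n-m}N^{-3}\ge 1-N^{-2}$, which exceeds $p_{e}/N$ for all moderately large $N$; hence no position clears the threshold and the code has rate $0$.

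The step that needs the most care — and the only genuine obstacle — is justifying the two extremal-monotonicity claims cleanly: one must argue by induction on the level that the per-level maximum (respectively minimum) is governed solely by the ``$-$'' (respectively ``$+$'') transform, using that at every node the complementary transform moves $Z$ in the favorable direction, so that the crude per-step doubling (respectively the exact squaring) propagates to the leaves. Once this is established, the choice of $n_0$ is immediate: both parts hold as soon as $N=2^n$ satisfies $N>1/p_{e}$ (the rate-$0$ case is in fact essentially unconstrained), so it suffices to take any $n_0$ with $2^{n_0}>1/p_{e}$, which depends on $p_{e}$ as claimed.
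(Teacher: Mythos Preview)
Your argument is correct. The paper does not give its own proof of this lemma; it simply cites it as Lemma~2 of \cite{mondelli2020sublinear}, so there is no in-paper proof to compare against. Your approach --- propagating the one-step bounds $Z(W^0)\le 2Z(W)$ and $Z(W^1)=Z(W)^2$ (respectively $Z(W^0)\ge Z(W)$ and $Z(W^1)=Z(W)^2$) through all $n-m$ levels to control the largest (respectively smallest) synthetic Bhattacharyya parameter --- is the natural and standard one, and the explicit threshold $2^{n_0}>1/p_e$ that falls out is exactly what the lemma promises.
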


At this point, the proof of Theorem~\ref{thm:latency_p_parallel} is presented.

\begin{proof}[Proof of Theorem~\ref{thm:latency_p_parallel}]
	The decoding tree is divided into two segments. The first part is called $\mathcal{F}_1$ and it consists of all nodes/edges at distance at most $\lceil\log_2(N/P)\rceil$ from the root node.
	The second part is called $\mathcal{F}_2$ and it consists of the rest, which are all the nodes/edges in the bottom $\lfloor\log_2 P\rfloor$ layers. To analyze the latency, three cases are considered: ({\bf Case~A}) $N^{0.01}\leq P \leq N^{0.99}$ (moderate values of $P$), ({\bf Case~B}) $N^{0.99}\le P$ (large values of $P$), and ({\bf Case~C}) $P\le N^{0.01}$ (small values of $P$).

	
	\vspace{1em}
	
	\noindent {\bf Case~A:} $N^{0.01}\leq P \leq N^{0.99}$.
 Let us first look at $\mathcal{F}_1$, and consider pruning at depths $k_1$ and $k_1+k_2$, with
\begin{align}\label{eq:k1k2new}
\begin{split}
k_1 &= \bigg\lceil c_1 \log_2 \log_2 \frac{N}{P} \bigg\rceil,\\
k_2 &= \bigg\lceil c_2 \log_2 \log_2 \frac{N}{P} \bigg\rceil,
\end{split}
\end{align}
where $c_1$ and $c_2$ are constants to be determined later. Further assume that
\begin{align}
c_1\gamma_1h_2^{(-1)}\left(\frac{\gamma_1(\mu+1)-1}{\gamma_1\mu}\right) &> 1,\label{eq:suff_conditions1}\\
c_2\gamma_2h_2^{(-1)}\left(\frac{\gamma_2(\mu+1)-1}{\gamma_2\mu}\right) &> 1,\label{eq:suff_conditions2}
\end{align}
where the constants $\gamma_1$ and $\gamma_2$ will be also determined later.
If \eqref{eq:suff_conditions1} and \eqref{eq:suff_conditions2} are true,
then, as $P\leq N^{0.99}$, for sufficiently large values of $N$,
\begin{align}\label{eq:pruning_conditions_k1k2}
\begin{split}
2^{-2^{k_1\gamma_1 h_2^{(-1)}\left(\frac{\gamma_1(\mu+1)-1}{\gamma_1\mu}\right)}} &\leq \frac{1}{N^3},\\
2^{-2^{k_2\gamma_2 h_2^{(-1)}\left(\frac{\gamma_2(\mu+1)-1}{\gamma_2\mu}\right)}} &\leq \frac{1}{N^3}.
\end{split}
\end{align}
Also,
    \begin{align}\label{eq:limit}
        \lim_{\gamma_1\rightarrow 1} \gamma_1 h_2^{(-1)}\left(\frac{\gamma_1(\mu+1)-1}{\gamma_1\mu}\right) = \frac{1}{2}.
    \end{align}
We choose $c_1=2+\epsilon$ for a positive $\epsilon$. In view of (\ref{eq:limit}), there exists $\delta>0$ such that \eqref{eq:suff_conditions1} is satisfied by taking $\gamma_1 = 1-\delta$. Furthermore, we pick $\gamma_2 = 0.9$ and $c_2 =100$. Selecting $\mu\ge 2$ ensures that \eqref{eq:suff_conditions2} holds.


Now, the latency associated to $\mathcal F_1$ can be computed. To do so, $\mathcal F_1$ is partitioned into three parts: \emph{(i)} nodes that appear above depth $k_1$, \emph{(ii)} what remains between depth $k_1$ and the next $k_2$ layers after pruning the tree at layer $k_1$, and \emph{(iii)} what remains of $\mathcal F_1$ after pruning at depth $k_1+k_2$.

For part \emph{(i)}, the total decoding weight sums up to
\begin{equation}\label{eq:term1}
    \sum_{i=1}^{k_1} 2^i \bigg\lceil\frac{N}{2^i P}\bigg\rceil \le 2^{k_1+1}+ k_1\frac{N}{P}.
\end{equation}

At layer $k_1$, there are a total of $2^{k_1}$ nodes prior to the pruning. By using Lemma~\ref{lemma:unpolarized} and the first inequality in (\ref{eq:pruning_conditions_k1k2}), there are at most
\begin{equation}\label{eq:defa1}
a_1 \triangleq c2^{k_1(1-\frac{1-\gamma_1}{\mu})} \leq c2^{k_1}
\end{equation}
nodes whose Bhattacharyya parameter is in the interval $[1/N^3, 1-1/N^3]$. Thus, by applying Lemma~\ref{lemma:rate01} with $M=2^{k_1}$ and desired error probability set to $p_e \frac{2^{k_1}}{N}$, all but those $a_1$ nodes can be pruned. Hence, part \emph{(ii)} of $\mathcal F_1$ consists of at most $a_1$ sub-trees with depth $k_2$. Consequently, the total decoding weight for part \emph{(ii)} can be upper bounded by
\begin{equation}\label{eq:term2}
 a_1\, \sum_{i=1}^{k_2} 2^i \bigg\lceil\frac{N}{2^{i+k_1} P}\bigg\rceil \le  a_1\,2^{k_2+1}+ a_1\, k_2\frac{N}{P2^{k_1}}.   
\end{equation}

At layer $k_2$, each of the sub-trees has a total of $2^{k_2}$ nodes before pruning. By using Lemma~\ref{lemma:unpolarized} and the second inequality in (\ref{eq:pruning_conditions_k1k2}), at most $c2^{k_2(1-\frac{1-\gamma_2}{\mu})}$ of these nodes have Bhattacharyya parameter in the interval $[1/N^3, 1-1/N^3]$. Let $v$ denote one of these at most $c2^{k_2(1-\frac{1-\gamma_2}{\mu})}$ nodes, and consider the subtree rooted at $v$. If we descend $k_1$ layers in this subtree, there are a total of $2^{k_1}$ nodes in it prior to pruning. However, by Lemma~\ref{lemma:unpolarized} and the first inequality in (\ref{eq:pruning_conditions_k1k2}), at most $c^{k_1(1-\frac{1-\gamma_1}{\mu})}$ of these $2^{k_1}$ nodes have Bhattacharyya parameter in the interval $[1/N^3, 1-1/N^3]$. Thus, by applying Lemma~\ref{lemma:rate01} with $M=2^{k_1+k_2}$ and error probability set to $p_e \frac{2^{k_1+k_2}}{N}$, the number of remaining nodes after pruning at depth $k_1+k_2$ can be upper bounded by
	\begin{equation}\label{eq:defa2}
	    a_2 \triangleq c^2 2^{k_1(1-\frac{1-\gamma_1}{\mu})}2^{k_2(1-\frac{1-\gamma_2}{\mu})}.
	\end{equation}
	Consequently, the total decoding weight for part \emph{(iii)} can be upper bounded by 
	\begin{equation}\label{eq:term3}
\begin{split}
    	a_2\, \sum_{i=1}^{\lceil\log_2(N/P)\rceil-k_1-k_2} 2^i \bigg\lceil\frac{N}{2^{i+k_1+k_2} P}\bigg\rceil &\le  a_2\,2^{\lceil\log_2(N/P)\rceil-k_1-k_2+1}\\
    	&\hspace{1em}+
	 a_2\left(\bigg\lceil\log_2\left(\frac{N}{P}\right)\bigg\rceil - k_1 - k_2\right) \frac{N}{P2^{k_1+k_2}}  . 
	\end{split}	    
\end{equation}

As a result, the latency associated to $\mathcal F_1$ is upper bounded by the sum of the terms in \eqref{eq:term1}, \eqref{eq:term2}, and \eqref{eq:term3}. By using the definitions of $k_1$ and $k_2$ in \eqref{eq:k1k2new} and of $a_1$ and $a_2$ in \eqref{eq:defa1} and \eqref{eq:defa2}, after some algebraic manipulations,
	\begin{equation}\label{eq:expressions}
	    \begin{split}
	        &2^{k_1+1} \le  4\left(\log_2\frac{N}{P}\right)^{c_1},\\
	        &a_1\,2^{k_2+1}  \le c2^{k_1} 2^{k_2+1}\le 8c\left(\log_2\frac{N}{P}\right)^{c_1+c_2},\\
	        &a_1\, k_2\frac{N}{P2^{k_1}} =\frac{c\frac{N}{P}k_2}{2^{k_1\frac{1-\gamma_1}{\mu}}}\le \frac{c\,\frac{N}{P}\left(c_2\,\log_2 \log_2 \frac{N}{P}+1\right)}{\left(\log_2 \frac{N}{P}\right)^{c_1(1-\gamma_1)/\mu}},\\
	       &a_2\,2^{\lceil\log_2(N/P)\rceil-k_1-k_2+1} =\frac{2c^2\,2^{\lceil\log_2(N/P)\rceil}}{2^{k_1\frac{1-\gamma_1}{\mu}}2^{k_2\frac{1-\gamma_2}{\mu}}}\le\frac{4c^2\frac{N}{P}}{\left(\log_2 \frac{N}{P}\right)^{\frac{c_1(1-\gamma_1)}{\mu} + \frac{c_2(1-\gamma_2)}{\mu}}},\\
	    &a_2\left(\bigg\lceil\log_2\left(\frac{N}{P}\right)\bigg\rceil - k_1 - k_2\right) \frac{N}{P2^{k_1+k_2}} =  \frac{c^2\frac{N}{P}\left(\left\lceil\log_2\left(\frac{N}{P}\right)\right\rceil - k_1 - k_2\right)}{2^{k_1\frac{1-\gamma_1}{\mu}}2^{k_2\frac{1-\gamma_2}{\mu}}} \\
	    &\hspace{16.5em}\le  \frac{c^2\frac{N}{P}\log_2 \frac{N}{P}}{\left(\log_2 \frac{N}{P}\right)^{\frac{c_1(1-\gamma_1)}{\mu} + \frac{c_2(1-\gamma_2)}{\mu}}}.
	    \end{split}
	\end{equation}
Note that $\frac{c_1(1-\gamma_1)}{\mu} > 0$ and $\frac{c_2(1-\gamma_2)}{\mu} > 1$, while $\frac{N}{P}\geq N^{0.01}$. Thus, for large $N$, all the right hand sides of the expressions in \eqref{eq:expressions} are $o\left(\frac{N}{P}\log_2\log_2\frac{N}{P}\right)$, and the term $\frac{N}{P}k_1$ is the dominant one in the computation of the latency associated to $\mathcal F_1$. As a result, for sufficiently large $N$, this latency is upper bounded by 
	\begin{align}\label{eq:upper_bound_F_1_case_A}
	   (2+\epsilon)\frac{N}{P}\log_2 \log_2 \frac{N}{P},
	\end{align}
	for any $\epsilon>0$.
	
Let us now look at $\mathcal{F}_2$, where pruning starts at layer $k_3 = \lceil\log_2 \frac{N}{P}\rceil$. By applying Lemma~1 of \cite{mondelli2020sublinear} at level $k_3$, for any $\nu>1$, 
\begin{equation}\label{eq:battaoldbd}
    \mathbb P(Z_{k_3}\in [2^{-\nu k_3}, 1-2^{-\nu k_3}])\le c 2^{-k_3/\mu},
\end{equation}
where the constant $c$ depends solely on $\nu$ (and not on $k_3$ or $W$). Since $P\le N^{0.99}$, $k_3\ge 0.01\log_2 N$. Thus, by taking $\nu=300$ in \eqref{eq:battaoldbd}, at level $k_3$, the number of nodes whose Bhattacharyya parameter is in the interval $[1/N^3, 1-1/N^3]$ is at most 
\begin{equation}\label{eq:defa3}
a_3 \triangleq c_3\,2^{k_3(1-\frac{1}{\mu})},    
\end{equation}
for some constant $c_3$. Thus, by applying Lemma~\ref{lemma:rate01} with $M=2^{k_3}$ and error probability $\frac{p_e}{2^{n-k_3}}$, the number of remaining nodes after pruning at this layer can be upper bounded by $a_3$. Consequently, $\mathcal{F}_2$ consists of at most $a_3$ sub-trees of depth $\lfloor\log_2 P\rfloor$. Given that all nodes in $\mathcal{F}_2$ have decoding weights of $1$, the pruning strategy of \cite{mondelli2020sublinear} can be applied. Recall that $P\ge N^{0.01}$. Thus, by following the same strategy as in the proof of Theorem~1 in \cite{mondelli2020sublinear} and by boosting the constants $\nu$ by a factor of $100$, after pruning, each such sub-tree has a decoding weight of at most
	\begin{align}
	    c_4\, P^{1-\frac{1}{\mu}},
	\end{align}
for some constant $c_4$. Therefore, the decoding latency over $\mathcal{F}_2$ can be upper bounded by
	\begin{align}\label{eq:caseA_F2_upperbound}
	    a_3\, c_4\,P^{1-\frac{1}{\mu}} = c_5 N^{1-\frac{1}{\mu}},
	\end{align}
for some constant $c_5$. Combining the upper bounds in~(\ref{eq:upper_bound_F_1_case_A}) and~(\ref{eq:caseA_F2_upperbound}) concludes the proof for {\bf Case~A}.
	
		\vspace{1em}
	
	\noindent {\bf Case~B:} $N^{0.99} \leq P$. There is no need to prune part $\mathcal F_1$ of the tree. In fact, without any pruning, its latency is upper bounded by
	\begin{equation}\label{eq:smallpart}
	    \frac{N}{P}\log_2 \frac{N}{P} \le 0.01\,N^{0.01}\log_2 N.
	\end{equation}
	Part $\mathcal{F}_2$ starts at layer $k =  \lceil\log_2 \frac{N}{P}\rceil \leq \lceil 0.01 \cdot \log_2 N\rceil$. Recall that the decoding weights over $\mathcal{F}_2$ are all equal to $1$. Hence, the latency associated to $\mathcal{F}_2$ can be upper bounded by the decoding latency of the complete tree in a fully-parallel setup. This, in turn, is upper bounded by $cN^{1-\frac{1}{\mu}}$ for some universal constant $c>0$, see Theorem~1 of \cite{mondelli2020sublinear}. To conclude, note that the right hand side of \eqref{eq:smallpart} is smaller than $N^{1-\frac{1}{\mu}}$ for all sufficiently large $N$. Thus, the result for {\bf Case~B} readily follows.
	
		\vspace{1em}
	
	\noindent {\bf Case~C:} $P \leq N^{0.01}$. In this case, most of the latency is associated to $\mathcal{F}_1$. Recall that, when deriving the upper bound of the latency associated to $\mathcal{F}_1$ in {\bf Case~A}, the fact that $P \leq N^{0.99}$ is used, which is also satisfied in this case. Hence, by following the same argument as in {\bf Case~A}, for all sufficiently large $N$, the latency associated to $\mathcal{F}_1$ is upper bounded by
	\begin{align}\label{eq:upper_bound_F_1_case_C}
        (2+\epsilon)\frac{N}{P}\log_2 \log_2 \frac{N}{P},
    \end{align}
    for any $\epsilon>0$. Let us now look at $\mathcal{F}_2$. The tree is pruned at layer $k = \lceil\log_2 \frac{N}{P}\rceil\ge 0.99\log_2 N$. Thus, by applying \eqref{eq:battaoldbd} with $\nu=4$, at level $k$, the number of nodes whose Bhattacharyya parameter is in the interval $[1/N^3, 1-1/N^3]$ is at most $a\triangleq c(N/P)^{1-1/\mu}$, for some constant $c$. 
	Hence, by applying Lemma~\ref{lemma:rate01} with $M=2^{k}$, the number of remaining nodes after pruning at this layer can be upper bounded by $a$. Consequently,  $\mathcal{F}_2$ consists of at most $a$ many sub-trees of depth $\lfloor\log_2 P\rfloor$. Therefore, the latency associated to $\mathcal F_2$ is upper bounded by
	\begin{align}\label{eq:upper_bound_F_2_case_C}
       2 a P = o\left(\frac{N}{P}\right),
    \end{align}	
	where in the last step $P \leq N^{0.01}$ and $\mu \in [2,  5]$ are considered. This establishes the fact that~(\ref{eq:upper_bound_F_1_case_C}) is the dominant term in the computation of latency, which in turn completes the proof for {\bf Case~C}. 
\end{proof}

\section{Numerical Results}\label{sec:numerical}

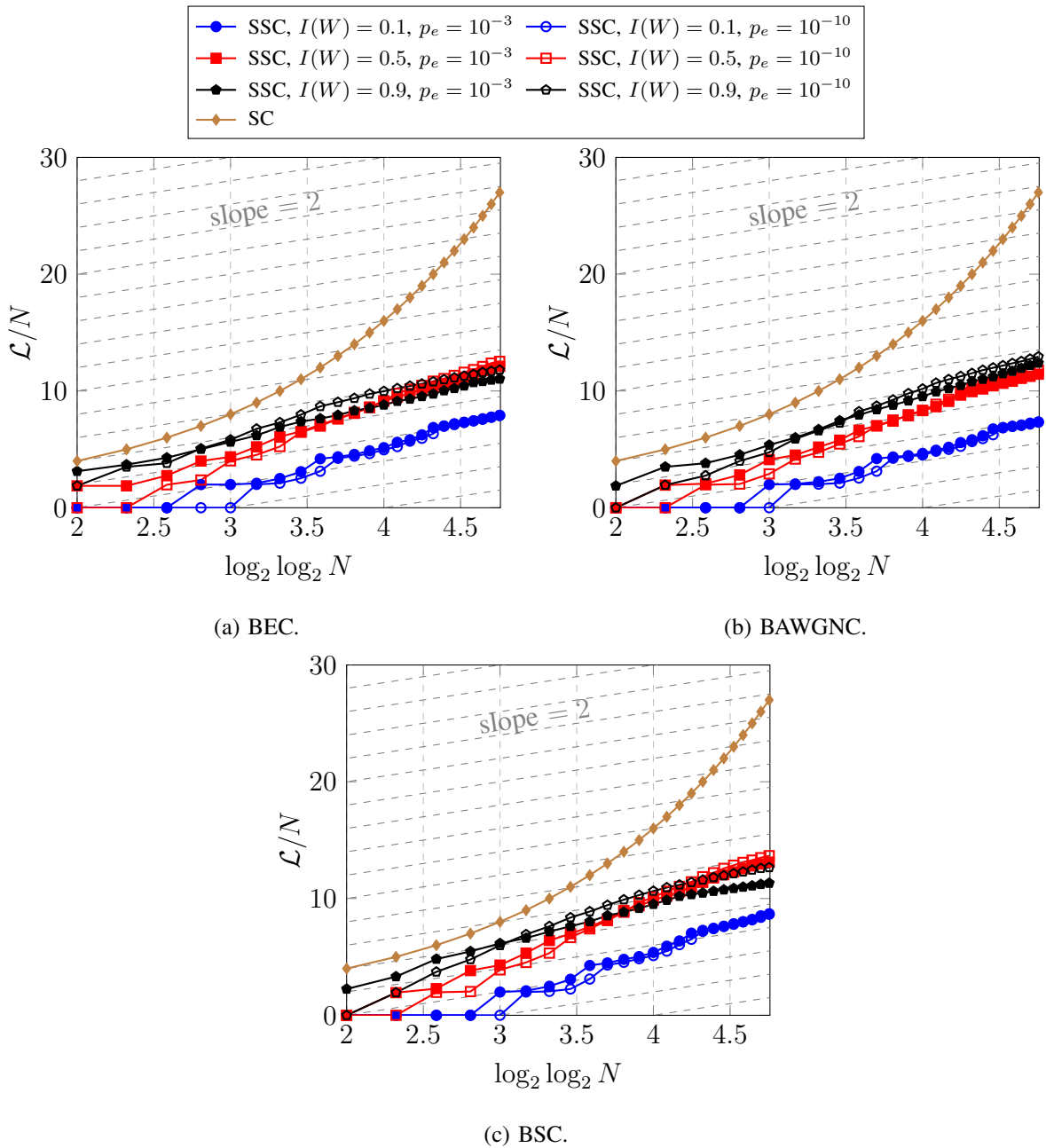
\begin{figure*}[t!]
\centering
\ref{legend-BECcomp}\\
\begin{subfigure}{0.48\textwidth}
\centering
\begin{tikzpicture}

\begin{axis}[
scale=1,
xmin=2,
xmax=4.76,
ymin=0,
ymax=30,
xmajorgrids=true,
grid style=dashed,
width=\textwidth, 
xlabel={$\log_2\log_2 N$},
ylabel={$\mathcal{L}/N$},
ylabel shift=-7,
legend cell align={left},
legend pos=north west,
legend style={
	column sep= 1mm,
	font=\fontsize{9pt}{9}\selectfont,
},
legend to name=legend-BECcomp,
legend columns=2,
]

\foreach \i in {-30,...,30}
{
\addplot[
color=gray,
dashed,
domain=-30:30,
samples=10,
smooth,
forget plot,
]
{2*x+2*\i};
}

\draw [color=gray, dashed, opacity=0] (1.5,20) -- (5,27) node [midway, above, sloped, gray, opacity=1] {slope $=2$};

\addplot[
color=blue,
mark=*,
thick
]
table {
0			0.000
1			0.000
1.584962501	0.000
2			0.000
2.321928095	0.000
2.584962501	0.000
2.807354922	1.984
3			1.984
3.169925001	2.094
3.321928095	2.504
3.459431619	3.088
3.584962501	4.192
3.700439718	4.361
3.807354922	4.563
3.906890596	4.877
4			5.144
4.087462841	5.606
4.169925001	5.828
4.247927513	6.242
4.321928095	6.870
4.392317423	7.004
4.459431619	7.191
4.523561956	7.310
4.584962501	7.444
4.64385619	7.619
4.700439718	7.758
4.754887502	7.913
};
\addlegendentry{SSC, $I(W)=0.1$, $p_e=10^{-3}$}

\addplot[
color=blue,
mark=o,
thick
]
table {
0			0.000
1			0.000
1.584962501	0.000
2			0.000
2.321928095	0.000
2.584962501	0.000
2.807354922	0.000
3			0.000
3.169925001	1.996
3.321928095	2.107
3.459431619	2.525
3.584962501	3.125
3.700439718	4.235
3.807354922	4.419
3.906890596	4.639
4			4.962
4.087462841	5.236
4.169925001	5.704
4.247927513	5.946
4.321928095	6.345
4.392317423	6.981
4.459431619	7.119
4.523561956	7.312
4.584962501	7.440
4.64385619	7.573
4.700439718	7.754
4.754887502	7.900
};
\addlegendentry{SSC, $I(W)=0.1$, $p_e=10^{-10}$}

\addplot[
color=red,
mark=square*,
thick
]
table {
0			 0.000
1			 0.000
1.584962501	 0.000
2			 1.875
2.321928095	 1.875
2.584962501	 2.750
2.807354922	 4.000
3			 4.359
3.169925001	 5.219
3.321928095	 6.084
3.459431619	 6.542
3.584962501	 7.099
3.700439718	 7.624
3.807354922	 8.084
3.906890596	 8.571
4			 9.143
4.087462841	 9.485
4.169925001	 9.788
4.247927513	10.136
4.321928095	10.364
4.392317423	10.616
4.459431619	10.921
4.523561956	11.180
4.584962501	11.459
4.64385619	11.751
4.700439718	11.929
4.754887502	12.109
};
\addlegendentry{SSC, $I(W)=0.5$, $p_e=10^{-3}$}

\addplot[
color=red,
mark=square,
thick
]
table {
0			 0.000
1			 0.000
1.584962501	 0.000
2			 0.000
2.321928095	 0.000
2.584962501	 1.969
2.807354922	 2.375
3			 4.016
3.169925001	 4.527
3.321928095	 5.240
3.459431619	 6.461
3.584962501	 6.994
3.700439718	 7.613
3.807354922	 8.162
3.906890596	 8.629
4			 9.154
4.087462841	 9.793
4.169925001	10.154
4.247927513	10.478
4.321928095	10.838
4.392317423	11.060
4.459431619	11.333
4.523561956	11.601
4.584962501	11.828
4.64385619	12.092
4.700439718	12.385
4.754887502	12.534
};
\addlegendentry{SSC, $I(W)=0.5$, $p_e=10^{-10}$}

\addplot[
color=black,
mark=pentagon*,
thick
]
table {
0			 0.000
1			 1.500
1.584962501	 1.750
2			 3.125
2.321928095	 3.688
2.584962501	 4.281
2.807354922	 5.016
3			 5.633
3.169925001	 6.188
3.321928095	 6.916
3.459431619	 7.378
3.584962501	 7.634
3.700439718	 7.929
3.807354922	 8.313
3.906890596	 8.531
4			 8.806
4.087462841	 9.121
4.169925001	 9.311
4.247927513	 9.534
4.321928095	 9.748
4.392317423	10.027
4.459431619	10.220
4.523561956	10.440
4.584962501	10.730
4.64385619	10.824
4.700439718	10.944
4.754887502	11.032
};
\addlegendentry{SSC, $I(W)=0.9$, $p_e=10^{-3}$}

\addplot[
color=black,
mark=pentagon,
thick
]
table {
0			 0.000
1			 0.000
1.584962501	 0.000
2			 1.875
2.321928095	 3.500
2.584962501	 3.812
2.807354922	 5.062
3			 5.820
3.169925001	 6.758
3.321928095	 7.303
3.459431619	 7.970
3.584962501	 8.677
3.700439718	 9.039
3.807354922	 9.377
3.906890596	 9.746
4			 9.977
4.087462841	10.220
4.169925001	10.425
4.247927513	10.612
4.321928095	10.793
4.392317423	10.980
4.459431619	11.140
4.523561956	11.281
4.584962501	11.423
4.64385619	11.591
4.700439718	11.695
4.754887502	11.798
};
\addlegendentry{SSC, $I(W)=0.9$, $p_e=10^{-10}$}

\addplot[
color=brown,
mark=diamond*,
thick
]
table {
0			1
1			2
1.584962501	3
2			4
2.321928095	5
2.584962501	6
2.807354922	7
3			8
3.169925001	9
3.321928095	10
3.459431619	11
3.584962501	12
3.700439718	13
3.807354922	14
3.906890596	15
4			16
4.087462841	17
4.169925001	18
4.247927513	19
4.321928095	20
4.392317423	21
4.459431619	22
4.523561956	23
4.584962501	24
4.64385619	25
4.700439718	26
4.754887502	27
};
\addlegendentry{SC}

\end{axis}
\end{tikzpicture}
\caption{BEC.}
\label{fig:NormLatency:bec}
\end{subfigure}
\begin{subfigure}{0.48\textwidth}
\centering
\begin{tikzpicture}

\begin{axis}[
scale=1,
xmin=2,
xmax=4.76,
ymin=0,
ymax=30,
xmajorgrids=true,
grid style=dashed,
width=\textwidth, 
xlabel={$\log_2\log_2 N$},
ylabel={$\mathcal{L}/N$},
ylabel shift=-7,
legend cell align={left},
legend pos=north west,
legend style={
	column sep= 1mm,
	font=\fontsize{9pt}{9}\selectfont,
},
]

\foreach \i in {-30,...,30}
{
\addplot[
color=gray,
dashed,
domain=-30:30,
samples=10,
smooth,
forget plot,
]
{2*x+2*\i};
}

\draw [color=gray, dashed, opacity=0] (1.5,20) -- (5,27) node [midway, above, sloped, gray, opacity=1] {slope $=2$};

\addplot[
color=blue,
mark=*,
thick
]
table {
0			0.000
1			0.000
1.584962501	0.000
2			0.000
2.321928095	0.000
2.584962501	0.000
2.807354922	0.000
3			1.992
3.169925001	2.035
3.321928095	2.225
3.459431619	2.520
3.584962501	3.104
3.700439718	4.210
3.807354922	4.346
3.906890596	4.460
4			4.650
4.087462841	4.928
4.169925001	5.177
4.247927513	5.581
4.321928095	5.824
4.392317423	6.165
4.459431619	6.751
4.523561956	6.867
4.584962501	6.964
4.64385619	7.082
4.700439718	7.242
4.754887502	7.341
};

\addplot[
color=blue,
mark=o,
thick
]
table {
0			0.000
1			0.000
1.584962501	0.000
2			0.000
2.321928095	0.000
2.584962501	0.000
2.807354922	0.000
3			0.000
3.169925001	1.996
3.321928095	2.018
3.459431619	2.115
3.584962501	2.543
3.700439718	3.132
3.807354922	4.243
3.906890596	4.391
4			4.518
4.087462841	4.841
4.169925001	5.002
4.247927513	5.258
4.321928095	5.669
4.392317423	5.921
4.459431619	6.253
4.523561956	6.849
4.584962501	6.972
4.64385619	7.052
4.700439718	7.171
4.754887502	7.334
};

\addplot[
color=red,
mark=square*,
thick
]
table {
0			 0.000
1			 0.000
1.584962501	 0.000
2			 0.000
2.321928095	 1.938
2.584962501	 2.062
2.807354922	 2.812
3			 4.125
3.169925001	 4.508
3.321928095	 5.195
3.459431619	 5.785
3.584962501	 6.644
3.700439718	 7.010
3.807354922	 7.511
3.906890596	 7.879
4			 8.319
4.087462841	 8.651
4.169925001	 9.106
4.247927513	 9.644
4.321928095	 9.942
4.392317423	10.170
4.459431619	10.443
4.523561956	10.659
4.584962501	10.832
4.64385619	11.079
4.700439718	11.253
4.754887502	11.434
};

\addplot[
color=red,
mark=square,
thick
]
table {
0			 0.000
1			 0.000
1.584962501	 0.000
2			 0.000
2.321928095	 0.000
2.584962501	 1.969
2.807354922	 2.031
3			 2.898
3.169925001	 4.172
3.321928095	 4.752
3.459431619	 5.461
3.584962501	 6.099
3.700439718	 7.021
3.807354922	 7.416
3.906890596	 7.959
4			 8.333
4.087462841	 8.859
4.169925001	 9.288
4.247927513	 9.857
4.321928095	10.148
4.392317423	10.446
4.459431619	10.679
4.523561956	10.904
4.584962501	11.117
4.64385619	11.363
4.700439718	11.568
4.754887502	11.772
};

\addplot[
color=black,
mark=pentagon*,
thick
]
table {
0			 0.000
1			 0.000
1.584962501	 1.750
2			 1.875
2.321928095	 3.500
2.584962501	 3.812
2.807354922	 4.531
3			 5.375
3.169925001	 5.996
3.321928095	 6.701
3.459431619	 7.460
3.584962501	 7.946
3.700439718	 8.405
3.807354922	 8.793
3.906890596	 9.157
4			 9.528
4.087462841	 9.936
4.169925001	10.215
4.247927513	10.484
4.321928095	10.805
4.392317423	11.010
4.459431619	11.227
4.523561956	11.485
4.584962501	11.695
4.64385619	11.946
4.700439718	12.189
4.754887502	12.395
};

\addplot[
color=black,
mark=pentagon,
thick
]
table {
0			 0.000
1			 0.000
1.584962501	 0.000
2			 0.000
2.321928095	 1.938
2.584962501	 2.750
2.807354922	 4.000
3			 4.742
3.169925001	 5.895
3.321928095	 6.600
3.459431619	 7.260
3.584962501	 8.233
3.700439718	 8.714
3.807354922	 9.246
3.906890596	 9.789
4			10.184
4.087462841	10.681
4.169925001	10.991
4.247927513	11.254
4.321928095	11.519
4.392317423	11.785
4.459431619	11.986
4.523561956	12.174
4.584962501	12.369
4.64385619	12.537
4.700439718	12.741
4.754887502	12.934
};

\addplot[
color=brown,
mark=diamond*,
thick
]
table {
0			1
1			2
1.584962501	3
2			4
2.321928095	5
2.584962501	6
2.807354922	7
3			8
3.169925001	9
3.321928095	10
3.459431619	11
3.584962501	12
3.700439718	13
3.807354922	14
3.906890596	15
4			16
4.087462841	17
4.169925001	18
4.247927513	19
4.321928095	20
4.392317423	21
4.459431619	22
4.523561956	23
4.584962501	24
4.64385619	25
4.700439718	26
4.754887502	27
};

\end{axis}
\end{tikzpicture}
\caption{BAWGNC.}
\label{fig:NormLatency:bawgnc}
\end{subfigure}
\begin{subfigure}{0.48\textwidth}
\centering
\begin{tikzpicture}

\begin{axis}[
scale=1,
xmin=2,
xmax=4.76,
ymin=0,
ymax=30,
xmajorgrids=true,
grid style=dashed,
width=\textwidth, 
xlabel={$\log_2\log_2 N$},
ylabel={$\mathcal{L}/N$},
ylabel shift=-7,
legend cell align={left},
legend pos=north west,
legend style={
	column sep= 1mm,
	font=\fontsize{9pt}{9}\selectfont,
},
]

\foreach \i in {-30,...,30}
{
\addplot[
color=gray,
dashed,
domain=-30:30,
samples=10,
smooth,
forget plot,
]
{2*x+2*\i};
}

\draw [color=gray, dashed, opacity=0] (1.5,20) -- (5,27) node [midway, above, sloped, gray, opacity=1] {slope $=2$};

\addplot[
color=blue,
mark=*,
thick
]
table {
0			0.000
1			0.000
1.584962501	0.000
2			0.000
2.321928095	0.000
2.584962501	0.000
2.807354922	0.000
3			1.992
3.169925001	2.094
3.321928095	2.484
3.459431619	3.084
3.584962501	4.264
3.700439718	4.484
3.807354922	4.788
3.906890596	5.024
4			5.383
4.087462841	5.917
4.169925001	6.366
4.247927513	7.031
4.321928095	7.270
4.392317423	7.424
4.459431619	7.617
4.523561956	7.839
4.584962501	7.999
4.64385619	8.212
4.700439718	8.517
4.754887502	8.652
};

\addplot[
color=blue,
mark=o,
thick
]
table {
0			0.000
1			0.000
1.584962501	0.000
2			0.000
2.321928095	0.000
2.584962501	0.000
2.807354922	0.000
3			0.000
3.169925001	1.996
3.321928095	2.047
3.459431619	2.249
3.584962501	3.103
3.700439718	4.307
3.807354922	4.537
3.906890596	4.846
4			5.115
4.087462841	5.497
4.169925001	6.049
4.247927513	6.508
4.321928095	7.192
4.392317423	7.446
4.459431619	7.614
4.523561956	7.793
4.584962501	8.013
4.64385619	8.175
4.700439718	8.391
4.754887502	8.692
};

\addplot[
color=red,
mark=square*,
thick
]
table {
0			 0.000
1			 0.000
1.584962501	 0.000
2			 0.000
2.321928095	 1.938
2.584962501	 2.281
2.807354922	 3.828
3			 4.297
3.169925001	 5.328
3.321928095	 6.402
3.459431619	 6.985
3.584962501	 7.575
3.700439718	 8.217
3.807354922	 8.983
3.906890596	 9.381
4			 9.861
4.087462841	10.213
4.169925001	10.574
4.247927513	10.997
4.321928095	11.382
4.392317423	11.762
4.459431619	12.038
4.523561956	12.282
4.584962501	12.550
4.64385619	12.787
4.700439718	12.953
4.754887502	13.115
};

\addplot[
color=red,
mark=square,
thick
]
table {
0			 0.000
1			 0.000
1.584962501	 0.000
2			 0.000
2.321928095	 0.000
2.584962501	 1.969
2.807354922	 2.031
3			 3.891
3.169925001	 4.527
3.321928095	 5.322
3.459431619	 6.660
3.584962501	 7.405
3.700439718	 8.117
3.807354922	 8.829
3.906890596	 9.655
4			10.099
4.087462841	10.624
4.169925001	11.034
4.247927513	11.427
4.321928095	11.865
4.392317423	12.203
4.459431619	12.578
4.523561956	12.850
4.584962501	13.091
4.64385619	13.286
4.700439718	13.493
4.754887502	13.667
};

\addplot[
color=black,
mark=pentagon*,
thick
]
table {
0			 0.000
1			 0.000
1.584962501	 1.750
2			 2.250
2.321928095	 3.312
2.584962501	 4.812
2.807354922	 5.469
3			 6.156
3.169925001	 6.613
3.321928095	 7.166
3.459431619	 7.623
3.584962501	 8.030
3.700439718	 8.536
3.807354922	 8.833
3.906890596	 9.166
4			 9.510
4.087462841	 9.858
4.169925001	10.196
4.247927513	10.343
4.321928095	10.477
4.392317423	10.624
4.459431619	10.748
4.523561956	10.867
4.584962501	10.984
4.64385619	11.090
4.700439718	11.218
4.754887502	11.317
};

\addplot[
color=black,
mark=pentagon,
thick
]
table {
0			 0.000
1			 0.000
1.584962501	 0.000
2			 0.000
2.321928095	 1.938
2.584962501	 3.719
2.807354922	 4.781
3			 5.984
3.169925001	 6.922
3.321928095	 7.619
3.459431619	 8.367
3.584962501	 8.883
3.700439718	 9.432
3.807354922	 9.901
3.906890596	10.293
4			10.638
4.087462841	10.940
4.169925001	11.173
4.247927513	11.381
4.321928095	11.594
4.392317423	11.801
4.459431619	11.960
4.523561956	12.136
4.584962501	12.285
4.64385619	12.449
4.700439718	12.590
4.754887502	12.659
};

\addplot[
color=brown,
mark=diamond*,
thick
]
table {
0			1
1			2
1.584962501	3
2			4
2.321928095	5
2.584962501	6
2.807354922	7
3			8
3.169925001	9
3.321928095	10
3.459431619	11
3.584962501	12
3.700439718	13
3.807354922	14
3.906890596	15
4			16
4.087462841	17
4.169925001	18
4.247927513	19
4.321928095	20
4.392317423	21
4.459431619	22
4.523561956	23
4.584962501	24
4.64385619	25
4.700439718	26
4.754887502	27
};

\end{axis}
\end{tikzpicture}
\caption{BSC.}
\label{fig:NormLatency:bsc}
\end{subfigure}
\caption{Normalized latency of SC and SSC decoding of polar codes in a fully-serial implementation ($P=1$). As the code length $N$ increases, the slope of the curves for SSC decoding tends to $2$, confirming that the latency of the simplified decoder scales as $(2+o(1))N\log_2\log_2 N$.}
\label{fig:NormLatency}
\end{figure*}

This section numerically evaluates SSC-decoding latency for polar codes, constructed based on Definition~\ref{def:construction} with $4\leq \log_2 N\leq 27$, when a limited number of PEs are available. To illustrate SSC-decoding latency in a fully-serial implementation ($P=1$), Fig.~\ref{fig:NormLatency} plots the latency normalized with respect to the block length $N$, namely $\mathcal{L}/N$ (on the $y$-axis) versus $\log_2\log_2 N$ (on the $x$-axis) when $I(W)\in\{0.1,0.5,0.9\}$ and $p_e\in\{10^{-3},10^{-10}\}$ for BEC (Fig.~\ref{fig:NormLatency:bec}), BAWGNC (Fig.~\ref{fig:NormLatency:bawgnc}), and BSC (Fig.~\ref{fig:NormLatency:bsc}). These figures show that SSC decoder's normalized decoding latency grows linearly with $\log_2\log_2 N$, confirming Theorem~\ref{thm:latency_p_parallel}'s upper bound (see (\ref{eq:constant})). Moreover, the curves' slope approaches $2$, as predicted by our theoretical result. The normalized latency of SC decoding grows exponentially in the $\log_2\log_2 N$ domain because the SC decoder has a latency of $N\log_2 N$ when $P=1$.

\begin{figure}[t]
\centering
\begin{tikzpicture}

\begin{axis}[
scale=1,
xmin=4,
xmax=27,
ymin=0,
ymax=31,
xmajorgrids=true,
grid style=dashed,
width=.48\textwidth,
xlabel={$\log_2 N$},
ylabel={$\log_2 \mathcal{L}$},
ylabel shift=-7,
legend cell align={left},
legend pos=north west,
set layers=standard,
legend style={
	column sep= 1mm,
	font=\fontsize{9pt}{9}\selectfont,
	draw=none,
	fill opacity=0.75,
	text opacity = 1,
},
clip mode=individual,
]

\foreach \i in {-30,...,30}
{
\addplot[
color=gray,
dashed,
domain=-30:30,
samples=10,
smooth,
forget plot,
]
{.72*x+2.5*\i};
}

\draw [color=gray, dashed, opacity=0] (17,1.1) -- (27,8.3) node [midway, above, sloped, gray, opacity=1] {slope $=0.72$};

\addplot[
color=blue,
mark=*,
thick
]
table {
4   3.000
5   3.000
6   4.907
7   6.000
8   6.807
9   7.755
10  8.672
11  9.555
12 10.416
13 11.242
14 12.046
15 12.851
16 13.639
17 14.430
18 15.217
19 15.992
20 16.761
21 17.528
22 18.289
23 19.047
24 19.801
25 20.556
26 21.306
27 22.056
};
\addlegendentry{$P=\frac{N}{2}$}

\addplot[
color=red,
mark=square*,
thick
]
table {
4   3.322
5   4.000
6   5.248
7   6.459
8   7.170
9   8.140
10  8.977
11  9.861
12 10.653
13 11.479
14 12.236
15 13.038
16 13.796
17 14.578
18 15.337
19 16.105
20 16.854
21 17.614
22 18.361
23 19.113
24 19.857
25 20.607
26 21.349
27 22.094
};
\addlegendentry{$P=N^{\frac{1}{2}}$}

\addplot[
color=black,
mark=triangle*,
thick
]
table {
4   4.000
5   4.907
6   6.209
7   7.672
8   8.295
9   9.349
10 10.326
11 10.997
12 11.987
13 12.819
14 13.632
15 14.398
16 15.230
17 16.038
18 16.819
19 17.574
20 18.346
21 19.118
22 19.876
23 20.650
24 21.417
25 22.182
26 22.921
27 23.672
};
\addlegendentry{$P=N^{\frac{1}{\mu}}$}


\addplot[
color=brown,
mark=pentagon*,
thick,
on layer={axis foreground}
]
table {
4   4.907
5   5.907
6   7.459
7   9.000
8   9.160
9  10.409
10 11.628
11 12.729
12 13.844
13 14.430
14 15.500
15 16.572
16 17.228
17 18.275
18 19.316
19 20.067
20 21.091
21 21.866
22 22.900
23 23.716
24 24.544
25 25.575
26 26.432
27 27.301
};
\addlegendentry{$P=N^{\frac{1}{8}}$}

\addplot[
color=cyan,
mark=diamond*,
thick,
on layer={axis foreground}
]
table {
4   4.907
5   5.907
6   7.459
7   9.000
8  10.124
9  11.384
10 12.605
11 13.710
12 14.828
13 15.931
14 17.015
15 18.099
16 19.193
17 20.246
18 21.291
19 22.341
20 23.374
21 24.408
22 25.449
23 26.483
24 27.518
25 28.555
26 29.576
27 30.598
};
\addlegendentry{$P=1$}

\end{axis}
\end{tikzpicture}
\caption{Latency of SSC decoding of a polar code constructed for a BEC with $I(W)=0.5$ and $p_e=10^{-3}$ considering different values of $P$. The slope of the curve when $P=N^{\frac{1}{\mu}}$ is $1-\frac{1}{\mu}=0.72$ and is similar to the case where $P=\frac{N}{2}$.}
\label{fig:becLatResult}
\end{figure}
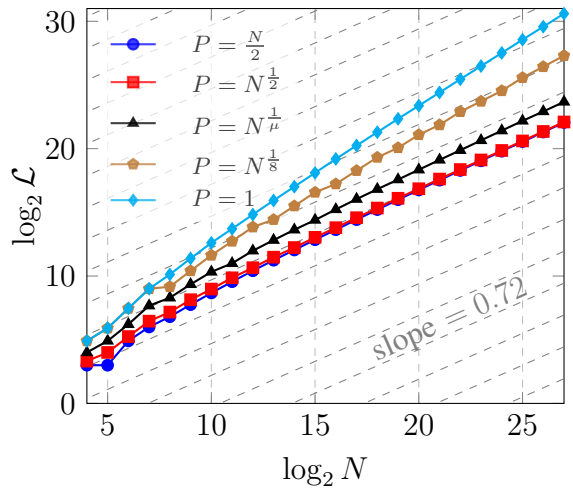

Fig.~\ref{fig:becLatResult} shows the SSC-decoding latency with $P \in \{1,N^{\frac{1}{8}},N^{\frac{1}{\mu}},N^{\frac{1}{2}},\frac{N}{2}\}$. The polar codes are constructed for a BEC with $I(W)=0.5$ and $p_e=10^{-3}$. It can be seen that, as $N$ increases, the slope of the curve with $P=N^{\frac{1}{\mu}}$ approaches $1-\frac{1}{\mu}$, which is $0.72$ for the BEC since $\mu \approx 3.63$ in this case. This scaling is the same as the lowest achievable latency when $P=\frac{N}{2}$. 

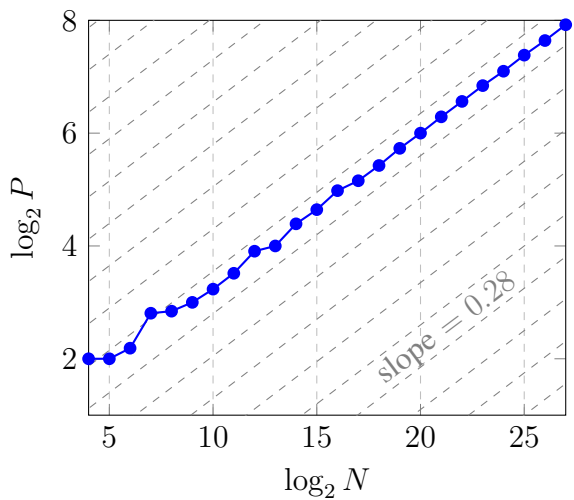
\begin{figure}[t]
\centering
\begin{tikzpicture}

\begin{axis}[
scale=1,
xmin=4,
xmax=27,
ymin=1,
ymax=8,
xmajorgrids=true,
grid style=dashed,
width=.48\textwidth,
xlabel={$\log_2 N$},
ylabel={$\log_2 P$},
legend cell align={left},
legend pos=north west,
legend style={
	column sep= 1mm,
	font=\fontsize{9pt}{9}\selectfont,
},
]

\foreach \i in {-30,...,30}
{
\addplot[
color=gray,
dashed,
domain=-30:30,
samples=10,
smooth,
forget plot,
]
{.28*x+.75*\i};
}

\draw [color=gray, dashed, opacity=0] (17,.8) -- (27,3.55) node [midway, above, sloped, gray, opacity=1] {slope $=0.28$};

\addplot[
color=blue,
mark=*,
thick,
]
table {
0   0.000
1   0.000
2   0.000
3   0.000
4   2
5   1.9999980927
6   2.1875000000
7   2.8073530197
8   2.8437500000
9   2.9999961853
10  3.2343750000
11  3.5156250000
12  3.9068884850
13  4.0000019073
14  4.3923115730
15  4.6438512802
16  4.9804687500
17  5.1562500000
18  5.4262709618
19  5.7304687500
20  6.0000042915
21  6.2890625000
22  6.5625000000
23  6.8427734375
24  7.0976562500
25  7.3828125000
26  7.6416015625
27  7.9218750000
};

\end{axis}
\end{tikzpicture}
\caption{Required value of $P$ to achieve a latency for SSC decoding that is $1\%$ more than the fully-parallel implementation ($P=\frac{N}{2}$). Polar codes are constructed for a BEC with $I(W)=0.5$ and $p_e=10^{-3}$. The slope of the curve is $\frac{1}{\mu}=0.28$.}
\label{fig:becPResult}
\end{figure}

Fig.~\ref{fig:becPResult} shows how $P$ scales as $N$ increases when SSC-decoder latency is only $1\%$ higher than fully-parallel SSC decoding (i.e., the latency for $P=\frac{N}{2}$). The polar codes at different block lengths are constructed for a BEC with $I(W)=0.5$ and $p_e=10^{-3}$. Theorem \ref{thm:latency_p_parallel} predicts that, if $P$ scales as $N^{\frac{1}{\mu}}$, then the latency is close to that of the fully-parallel implementation, which Fig.~\ref{fig:becPResult} confirms because the curve's slope is $\frac{1}{\mu}=0.28$.

\section{Summary}\label{sec:concl}

This paper characterizes the latency of simplified successive-cancellation (SSC) decoding when there is a limited number of processing elements available to implement the decoder. We show that for a polar code of block length $N$, when the number of processing elements $P$ is limited, the latency of SSC decoding is $O(N^{1-1/\mu}+\frac{N}{P}\log_2\log_2\frac{N}{P})$, where $\mu$ is the scaling exponent of the channel. The bound resulted in three important implications. First, a fully-parallel implementation with $P=\frac{N}{2}$ results in a sublinear latency for SSC decoding, which recovers the result in \cite{mondelli2020sublinear}. Second, a fully-serial implementation with $P=1$ results in a latency for SSC decoding that scales as $(2+o(1)) N\log_2\log_2 N$. Third, it is shown that $P=N^{1/\mu}$ in a semi-parallel implementation is the smallest $P$ that results in the same latency as that of the fully-parallel implementation of SSC decoding. 

\section*{Acknowledgments}

S.~A.~Hashemi is supported by a Postdoctoral Fellowship from the Natural Sciences and Engineering Research Council of Canada (NSERC) and by Huawei. M.~Mondelli is partially supported by the 2019 Lopez-Loreta Prize. A.~Fazeli and A.~Vardy were supported in part by the National Science Foundation under Grant CCF-1764104.

\newcommand{\SortNoop}[1]{}

\appendix 

\section{Proofs}\label{app:prf}

\begin{proof}[Proof of Lemma \ref{lemma:unpolarized}]
By applying Lemma~1 in \cite{mondelli2020sublinear}, for $n_0\ge 1$,
\begin{equation}\label{eq:nummid1}
    \mathbb P(Z_{n_0}\in [2^{- 2n_0}, 1-2^{- 2n_0}]) \le c_1\,2^{-n_0/\mu},
\end{equation}
where $c_1$ is a universal constant which does not depend on $n_0$, $W$.
Let $\{B_n\}_{n\ge 1}$ be a sequence of i.i.d. random variables with distribution Bernoulli$\left(1/2\right)$. Then, by using \eqref{eq:eqBMSC}, it is clear that, for $n\ge 1$,
\begin{equation*}
Z_{n_0+n} \le  \left\{\begin{array}{ll} Z_{n_0+n-1}^2, & \mbox{ if } B_n=1, \\ 2 Z_{n_0+n-1}, & \mbox{ if } B_n=0. \end{array}\right. 
\end{equation*}
Therefore, by applying Lemma 22 of \cite{HAU14}, we obtain that, for $n_1\ge 1$,
\begin{equation}\label{eq:lemma22}
{\mathbb P}\left(Z_{n_0+n_1} \le 2^{\scriptstyle-2^{\scriptstyle\sum_{i=1}^{n_1} B_i}}\mid Z_{n_0} = x\right)\ge 1- c_2 \hspace{0.1em} x(1-\log_2 x),
\end{equation}
with $c_2 = 2/(\sqrt{2}-1)^2$. Thus,
\begin{equation}\label{eq:Zn0n11}
\begin{split}
{\mathbb P}\left(Z_{n_0+n_1} \le 2^{\scriptstyle-2^{\scriptstyle\sum_{i=1}^{n_1} B_i}}\mid Z_{n_0} \le 2^{-n_0}\right)  & \ge  1- c_2 \hspace{0.1em} 2^{-n_0}(1+n_0) \\
& \ge  1- c_2\hspace{0.1em} \frac{\sqrt{2}}{\ln 2} \hspace{0.1em} 2^{-n_0/\mu}, 
\end{split}
\end{equation}
where the first inequality uses the fact that $1- c_2 \hspace{0.1em} x(1-\log_2 x)$ is decreasing in $x$ for any $x\le 2^{-n_0}\le 1/2$, and the second inequality uses that $1- c_2 \hspace{0.1em} 2^{-n_0}(1+n_0)\ge 1- c_2 \hspace{0.1em} \sqrt{2}\cdot 2^{-n_0/2} / \ln 2 $ for any $n_0\in \mathbb N$ and that $\mu >2$. Furthermore, by using the same passages of (54) in \cite{MHU15unif-ieeeit}, we obtain that, for any $\epsilon \in (0, 1/2)$,
\begin{equation}\label{eq:Zn0n12}
{\mathbb P}\left(2^{\scriptstyle-2^{\scriptstyle\sum_{i=1}^{n_1} B_i}} > 2^{\scriptstyle-2^{\scriptstyle n_1 \epsilon}}\right) \le 2^{-n_1(1-h_2(\epsilon))},
\end{equation}
where $h_2(x)=-x\log_2 x-(1-x)\log_2 (1-x)$ denotes the binary entropy function. By combining \eqref{eq:Zn0n11} and \eqref{eq:Zn0n12},
\begin{equation}\label{eq:Zn0n1fin}
    {\mathbb P}\left(Z_{n_0+n_1} \le 2^{\scriptstyle-2^{\scriptstyle n_1 \epsilon}}\mid Z_{n_0} \le 2^{-n_0}\right)   \ge 1- c_2\hspace{0.1em} \frac{\sqrt{2}}{\ln 2} \hspace{0.1em} 2^{-n_0/\mu}-2^{-n_1(1-h_2(\epsilon))}.
\end{equation}

Define $Y_n=1-Z_n$. Note that, if $Z_{n+1}=Z_n^2$, then 
\begin{equation}
    Y_{n+1}=1-(1-Y_n)^2=2Y_n-Y_n^2\le 2Y_n.
\end{equation}
Furthermore, if $Z_{n+1}\ge Z_n\sqrt{2-Z_n^2}$, then
\begin{equation}
    Y_{n+1}\le 1-(1-Y_n)\sqrt{2-(1-Y_n)^2}\le 2 Y_n^2,
\end{equation}
where in the last inequality the fact that $1-t\sqrt{2-t^2}\le 2(1-t)^2$ for any $t\in [0, 1]$ is used. Thus, by using \eqref{eq:eqBMSC}, for $n\ge 1$,
\begin{equation*}
Y_{n_0+n} \le  \left\{\begin{array}{ll} 2Y_{n_0+n-1}^2, & \mbox{ if } B_n=1, \\ 2 Y_{n_0+n-1}, & \mbox{ if } B_n=0. \end{array}\right. 
\end{equation*}
Define $\tilde{Y}_{n_0}=2Y_{n_0}$ and
\begin{equation*}
\tilde{Y}_{n_0+n} =  \left\{\begin{array}{ll} \tilde{Y}_{n_0+n-1}^2, & \mbox{ if } B_n=1, \\ 2 \tilde{Y}_{n_0+n-1}, & \mbox{ if } B_n=0. \end{array}\right. 
\end{equation*}
Then for any $n\ge 0$,
\begin{equation}\label{eq:tildeYY}
    Y_{n_0+n}\le \frac{1}{2}\tilde{Y}_{n_0+n}\le \tilde{Y}_{n_0+n}.
\end{equation}
By applying again Lemma~22 of \cite{HAU14} to the process $\tilde{Y}_n$, for $n_1\ge 1$,
\begin{equation}\label{eq:Zn0n13}
\begin{split}
{\mathbb P}\left(\tilde{Y}_{n_0+n_1} \le 2^{\scriptstyle-2^{\scriptstyle\sum_{i=1}^{n_1} B_i}}\mid \tilde{Y}_{n_0} \le 2^{-n_0}\right)  & \ge 1- c_2\hspace{0.1em} \frac{\sqrt{2}}{\ln 2} \hspace{0.1em} 2^{-n_0/\mu}, 
\end{split}
\end{equation}
which, combined with \eqref{eq:Zn0n12}, gives that, for any $\epsilon \in (0, 1/2)$,
\begin{equation}\label{eq:tildeYbd}
    {\mathbb P}\left(\tilde{Y}_{n_0+n_1} \le 2^{\scriptstyle-2^{\scriptstyle n_1 \epsilon}}\mid \tilde{Y}_{n_0} \le 2^{-n_0}\right)   \ge 1- c_2\hspace{0.1em} \frac{\sqrt{2}}{\ln 2} \hspace{0.1em} 2^{-n_0/\mu}-2^{-n_1(1-h_2(\epsilon))}.
\end{equation}
By using \eqref{eq:tildeYY} and the fact that $\tilde{Y}_{n_0}=2Y_{n_0}$, \eqref{eq:tildeYbd} implies that 
\begin{equation}\label{eq:Yn0n1fin}
        {\mathbb P}\left(Y_{n_0+n_1} \le 2^{\scriptstyle-2^{\scriptstyle n_1 \epsilon}}\mid Y_{n_0} \le 2^{-n_0-1}\right)   \ge 1- c_2\hspace{0.1em} \frac{\sqrt{2}}{\ln 2} \hspace{0.1em} 2^{-n_0/\mu}-2^{-n_1(1-h_2(\epsilon))}.
\end{equation}

Let $n\ge 1$. Set $n_1 = \lceil\gamma n\rceil$, $n_0 = n - \lceil\gamma n\rceil$, and $\epsilon = h_2^{(-1)}\left((\gamma(\mu+1)-1)/(\gamma\mu)\right)$, where $h_2^{(-1)}(\cdot)$ is the inverse of $h_2(x)$ for any $x\in [0, 1/2]$. Note that if $\gamma \in \left(1/(1+\mu), 1\right)$, then $\epsilon\in (0, 1/2)$. Consequently, \eqref{eq:Zn0n1fin} implies that
\begin{equation}\label{eq:Zn0n1fin2}
{\mathbb P}\left(Z_{n} \le 2^{- \scriptstyle 2^{\scriptstyle n \hspace{0.1em}\gamma\hspace{0.1em} h_2^{(-1)}\left(\frac{\gamma(\mu+1)-1}{\gamma\mu}\right)}}\mid Z_{n_0} \le 2^{-n_0}\right) 
\ge 1- c_3 \hspace{0.1em} 2^{-n\scriptstyle\frac{1-\gamma}{\mu}},
\end{equation}
where $c_3$ is a numerical constant. Similarly, by using that $Z_n=1-Y_n$, from \eqref{eq:Yn0n1fin},
\begin{equation}\label{eq:Yn0n1fin2}
{\mathbb P}\left(Z_{n} \ge 1- 2^{- \scriptstyle 2^{\scriptstyle n \hspace{0.1em}\gamma\hspace{0.1em} h_2^{(-1)}\left(\frac{\gamma(\mu+1)-1}{\gamma\mu}\right)}}\mid Z_{n_0} \ge 1- 2^{-n_0-1}\right) 
\ge 1- c_3 \hspace{0.1em} 2^{-n\scriptstyle\frac{1-\gamma}{\mu}}.
\end{equation}
The proof is concluded by the following chain of inequalities:
{\allowdisplaybreaks
\begin{align*}
        \mathbb P&\left(Z_n\in \left[2^{-2^{n\gamma h_2^{(-1)}\left(\frac{\gamma(\mu+1)-1}{\gamma\mu}\right)}}, 1-2^{-2^{n\gamma h_2^{(-1)}\left(\frac{\gamma(\mu+1)-1}{\gamma\mu}\right)}}\right]\right)\\
        &= 1- \mathbb P\left(Z_n\le 2^{-2^{n\gamma h_2^{(-1)}\left(\frac{\gamma(\mu+1)-1}{\gamma\mu}\right)}}\right)- \mathbb P\left(Z_n\ge 1-2^{-2^{n\gamma h_2^{(-1)}\left(\frac{\gamma(\mu+1)-1}{\gamma\mu}\right)}}\right) \\
        &\le 1- \mathbb P\left(Z_n\le 2^{-2^{n\gamma h_2^{(-1)}\left(\frac{\gamma(\mu+1)-1}{\gamma\mu}\right)}}, Z_{n_0} \le 2^{-n_0}\right)\\
        &\hspace{2em}- \mathbb P\left(Z_n\ge 1-2^{-2^{n\gamma h_2^{(-1)}\left(\frac{\gamma(\mu+1)-1}{\gamma\mu}\right)}}, Z_{n_0} \ge 1- 2^{-n_0-1}\right)\\
        &= 1- \mathbb P\left(Z_n\le 2^{-2^{n\gamma h_2^{(-1)}\left(\frac{\gamma(\mu+1)-1}{\gamma\mu}\right)}}\mid Z_{n_0} \le 2^{-n_0}\right)\mathbb P\left(Z_{n_0} \le 2^{-n_0}\right)\\
        &\hspace{2em}- \mathbb P\left(Z_n\ge 1-2^{-2^{n\gamma h_2^{(-1)}\left(\frac{\gamma(\mu+1)-1}{\gamma\mu}\right)}}\mid Z_{n_0} \ge 1- 2^{-n_0-1}\right)\mathbb P\left(Z_{n_0} \ge 1- 2^{-n_0-1}\right)\\
        &\stackrel{\mathclap{\mbox{\footnotesize(a)}}}{\le} 1- \left(1- c_3 \hspace{0.1em} 2^{-n\scriptstyle\frac{1-\gamma}{\mu}}\right)\left(1-\mathbb P(Z_{n_0}\in [2^{- n_0}, 1-2^{- n_0-1}])\right)\\
        &\stackrel{\mathclap{\mbox{\footnotesize(b)}}}{\le} 1- \left(1- c_3 \hspace{0.1em} 2^{-n\scriptstyle\frac{1-\gamma}{\mu}}\right)\left(1-c_1\,2^{-n\scriptstyle\frac{1-\gamma}{\mu}}\right)\\
        &\le (c_3+c_1) \hspace{0.1em} 2^{-n\scriptstyle\frac{1-\gamma}{\mu}},
\end{align*}
}
where \eqref{eq:Zn0n1fin2} and \eqref{eq:Yn0n1fin2} are used in (a), and \eqref{eq:nummid1} is used in (b).
\end{proof}

\end{document}